\pgfplotsset{compat=newest}
\newtheorem{theorem}{Theorem}
\newtheorem{lemma}[theorem]{Lemma}
\newtheorem{proposition}[theorem]{Proposition}
\newcommand{\RR}[0]{\mathbb{R}}
\renewcommand{\SS}[0]{\mathcal{S}}
\newcommand{\OPT}{\mathrm{OPT}}
\newcommand{\ALG}{\mathrm{ALG}}
\newcommand{\e}{\mathrm e}
\newcommand{\bigO}{\mathcal{O}}
\newcommand{\requests}{\mathcal{R}}
\newcommand{\C}{\mathcal{C}}
\renewcommand{\vec}[1]{\mathbf{#1}}
\begin{document}
\title{The Online Best Reply Algorithm for Resource Allocation Problems}
%
%
\author[1]{Max Klimm} 
\author[2]{Daniel Schmand} 
\author[3]{Andreas T{\"o}nnis\thanks{Partially supported by CONICYT grant PCI PII 20150140 and ERC Starting Grant 306465 (BeyondWorstCase).}}

\affil[1]{Operations Research, HU Berlin, Germany.\authorcr max.klimm@hu-berlin.de}
\affil[2]{Institute for Computer Science, Goethe University Frankfurt, Germany.\authorcr schmand@em.uni-frankfurt.de}
\affil[3]{Department of Computer Science, University of Bonn, Germany.\authorcr atoennis@uni-bonn.de}

\maketitle              
\begin{abstract}
We study the performance of a best reply algorithm for online resource allocation problems with a diseconomy of scale. In an online resource allocation problem, we are given a set of resources and a set of requests that arrive in an online manner. Each request consists of a set of feasible allocations and an allocation is a set of resources. The total cost of an allocation vector is given by the sum of the resources' costs, where each resource's cost depends on the total load on the resource under the allocation vector. We analyze the natural online procedure where each request is allocated greedily to a feasible set of resources that minimizes the individual cost of that particular request. In the literature, this algorithm is also known as a one-round walk in congestion games starting from the empty state. For unweighted resource allocation problems with polynomial cost functions with maximum degree $d$, upper bounds on the competitive ratio of this greedy algorithm were known only for the special cases $d\in\{1, 2, 3\}$. In this paper, we show a general upper bound on the competitive ratio of $d(d / W(\frac{1.2d-1}{d+1}))^{d+1}$ for the unweighted case where $W$ denotes the Lambert-W function on $\mathbb{R}_{\geq 0}$. For the weighted case, we show that the competitive ratio of the greedy algorithm is bounded from above by $(d/W(\frac{d}{d+1}))^{d+1}$.

\end{abstract}


\section{Introduction}

We consider a greedy best reply algorithm for online resource allocation problems. The set of feasible allocations for each request is a set of subsets of the resources. Each resource is endowed with a cost function that is a polynomial with non-negative coefficients depending on the total load of that resource.
In the online variant considered in this paper, the requests arrive one after another. Upon arrival of a request, we immediately and irrevocably choose a feasible allocation for that request without any knowledge about requests arriving in the future. After the sequence of requests terminates, we evaluate the solution quality of the best reply algorithm in terms of its \emph{competitive ratio}  defined as the worst-case over all instances of the ratio of the cost of an online solution and the cost of an offline optimal solution. Here, the cost of a solution is defined as the sum of the resources' cost under the allocation vector. The cost of a resource is the sum of the personal costs of each request on that resource. Specifically, we consider the natural greedy best reply algorithm that assigns each request to the allocation that minimizes the personal cost of the request. More formally, in an unweighted resource allocation problem all requests have a unit weight, the cost of each resource depends on the number of requests using it. In a weighted resource allocation problem, each request~$i$ has a weight $w_i$, the personal cost of each request on the resource depends on the total weight of requests using it.

A prominent application of this model is energy efficient algorithm design. Here, resources model machines or computing devices that can run at different speeds. A sequence of jobs is revealed in an online manner and has to be allocated to a set of machines such that all machines process the tasks allocated to them within a certain time limit. As a consequence, machines have to run at higher speed when more tasks are are allocated to the machine. As its speed is increased, the energy consumption of a machine increases superlinearly; a typical assumption in the literature is that the energy consumption is a polynomial with non-negative coefficients and maximal degree~3 as a function of the load \cite{Albers}. The aim is to find an allocation with minimal energy consumption. Our results imply bounds on the competitive ratio of the natural online algorithm that assigns each task to a (set of machines) that greedily minimizes the energy consumption of that task.

Another application of the resource allocation problems considered in this paper arises in the context of congestion games. Here, requests correspond to players and feasible allocations correspond to feasible strategies of that player. In a network congestion game, e.g., the set of strategies of a player is equal to the set of paths from some designated start node to a designated destination node in a given graph. Congestion occurs on links that are chosen by multiple users and is measured in terms of a load-dependent cost function.
Polynomial cost functions play a particularly important role in the field of traffic modeling; a typical class of cost functions used in traffic models are polynomials with non-negative coefficients and maximal degree~4 as proposed by the US Bureau of Public Roads~\cite{usPublicRoads}. 
The online variant of the resource allocation problem models the situation where users arrive one after another and choose a path that minimizes their private cost with respect to the users that are already present. This scenario is very natural, e.g., in situations where requests to a supplier of connected automotive navigation systems appear online and requests are routed such that the travel time for each request is minimized individually.
Our results imply bounds on the competitive ratio of the natural greedy algorithm were each player chooses a strategy that minimizes their total cost given the set of players already present in the network.

\subsection{Related Work}
\label{sec:related_work}
Already the approximation of optimal solutions to the offline version of resource allocation problems with polynomial cost functions is very challenging. Roughgarden~\cite{Roughgarden14} showed that there is a constant $\beta > 0$ such that the optimal solution cannot be approximated in polynomial time by a factor better than $(\beta d)^{d/2}$ when cost functions are polynomials of maximum degree $d$ with non-negative coefficients. This holds even for the unweighted case. For arbitrary cost functions, the optimal solution cannot be approximated by a constant factor in polynomial time \cite{Meyers2012}.
For polynomials of maximal degree $d$, currently the best known approximation algorithm is due to Makarychev and Srividenko~\cite{MakarychevS14} and uses a convex programming relaxation. They showed that randomly rounding an optimal fractional solution gives an $\mathcal{O}\big((\frac{0.792d}{\ln (d+1)})^{d}\big)$ approximate solution. This approach is highly centralized and relies on the fact that all requests are initially known, which both might be unrealistic assumptions for large-scale problems. 

Online and decentralized algorithms that have been studied in the literature are local search algorithms and multi-round best-reply dynamics. The analysis of both algorithms is technically very similar to the now-called \emph{smoothness} technique to establish bounds on the price of anarchy of Nash equilibria in congestion games \cite{AlandDGMS2011,AwerbuchAE13,Roughgarden15smoothness,ChristodoulouK05}. The price of anarchy is equal to the worst-case ratio of the cost of a Nash equilibrium and the cost of an optimal solution. 
To obtain tight bounds, one solves an optimization problem of the form
\[\min_{\lambda >0, \mu \in [0,1)} \Big\{\frac{\lambda}{1-\mu} : c(x+y)\leq \lambda x c(x) + \mu y c(y), \forall x,y \in \mathbb{N}, c \in \mathcal{C}\Big\}\;,\] where $\mathcal{C}$ is the set of cost functions in the game. For the case that $\mathcal{C}$ is the set of polynomial functions with maximum degree $d$ and positive coefficients, Aland et al.~\cite{AlandDGMS2011} used this approach to show that the price of anarchy is $\Phi_d^{d+1}$ where $\smash{\Phi_d \in\Theta(\frac{d}{\ln(d)})}$ is the unique solution to $\smash{(x+1)^d=x^{d+1}}$.
The price of stability, defined as the worst case of the ratio of the cost of a best Nash equilibrium and that of a system optimum, was not as well understood  until recently when Christodoulou et al.~\cite{ChristodoulouGGS18} showed that the price of stability is at least $(\Phi_d/2)^{d+1}$ for large $d$. For unweighted congestion games, Christodoulou and Gairing \cite{ChristodoulouG16} showed a tight bound on the price of stability in the order of $\Theta(d)$. Unfortunately, a best-response walk towards a Nash equilibrium can take exponential time~\cite{FabrikantPT04}, even for unweighted congestion games, so that price of anarchy results do not give rise to polynomial approximation algorithms. For weighted games, best-response walks may even cycle \cite{Harks2012}. On the other hand, random walks~\cite{GoemansMV05} or walks using approximate best-response steps~\cite{AwerbuchAEMS08} converge to approximate Nash equilibria in polynomial time.
In contrast to this, one-round walks in congestion games, or equivalently, the best reply algorithm for online resource allocation problems touches every request only once. Fanelli et al.~\cite{FanelliFM12} have shown a linear lower bound even for linear cost functions if the requests are restricted to make one best-response starting from a bad initial configuration. This lower bound does not hold for one round walks starting in the empty state.

The best reply algorithm with respect to the personal cost of a request, which we analyze here, has also been studied for the online setting before. For weighted resource allocation problems with linear cost functions, it turns out that the algorithm admits the same competitive ratio as the best reply algorithm with respect to the total cost function. There is a tight bound on the competitive ratio of $3 + 2\sqrt{2} \approx 5.83$, where the lower bound is due to Caragiannis et al.~\cite{caragiannis2011tight} and the upper bound is due to Harks et al.~\cite{HarksHPV07}. For $d > 1$, a first lower bound of $\Omega((d / \ln 2)^{d+1})$ has been shown by Caragiannis et al.~\cite{Caragiannis08}. A first upper bound dates back to the mid 90s when Awerbuch et al.~\cite{AwerbuchAGKKV95} gave an upper bound on the competitive ratio of personal cost best replies of $\Psi_d^{d+1}$, where $\Psi_d$ is defined to be the unique solution to the equation $(d+1)(x+1)^d = x^{d+1}$. However, they only considered the setting of singleton requests where each allocation contains a single resource only and all cost functions are equal to the identity. Bil{\`o} and Vinci~\cite{BiloV17} show that the worst-case competitive ratio is in fact attained for singletons and mention that the tight competitive ratio is $\Psi_d^{d+1}$, but their paper does not contain a proof of the latter result. 
Prior to that, Harks et al.~\cite{HarksHPV07} noted that the competitive ratio in the order of $\bigO(1.77^d d^{d+1})$. We here slightly improve the bound to a closed form as we note that $\Psi_d \leq d / W(\frac{d}{d+1})$, where $W$ is the Lambert-W function. This recovers the bound by Harks et al.\ in the limit since $1/W(\frac{d}{d+1}) \approx 1.77$ for $d$ large enough.

For unweighted instances it turns out that the personal cost best reply algorithm admits a better competitive ratio than the total cost best reply algorithm, where requests are allocated greedily such that the total cost of the current solution is minimized. For $d=1$, the tight bound is $\smash{\frac{(\phi +1)^2}{\phi} \approx 4.24}$ where $\smash{\phi = \frac{1+\sqrt{5}}{2}}$ is the golden ratio. The lower bound is due to Bil\`o et al.~\cite{bilo2011performance} and the upper bound is due to Christodoulou et al.~\cite{ChristodoulouMS12}. For arbitrary $d$, the lower bound of $(d+1)^{d+1}$ by Farzad et al.~\cite{Farzad08} also holds in this setting. There was no general upper bound known.

Harks et al.~\cite{HarksHP09,HarksHPV07} and Farzad et al.~\cite{Farzad08} both also studied a setting that is equivalent to ours. However, they measure cost slightly different as they define the cost of a resource as the integral of its cost function from $0$ to the current load. This different cost measure leads also to a different notion of the private cost of a request. However, as remarked by Farzad et al.~\cite{Farzad08}, the models are equivalent when only polynomial costs are considered. Additionally, they introduced a generalization of the model, where requests are only present during certain time windows. It is easy to see that all our results also hold in this slightly more general setting.

Close to our work is the analysis of best reply algorithms with different personal cost functions. Mirrokni and Vetta~\cite{MirrokniV04} were the first to study best-response dynamics with respect to social cost. Bjelde et al.~\cite{Bjelde2017approximation} analyzed the solution quality of local minima of the total cost function both for weighted and unweighted resource allocation problems. By a result of Orlin et al.\ \cite{orlinPunnenSchulz}, this admits a PTAS in the sense that an (1 + $ \epsilon$)-approximate local optimal
solution can be computed in polynomial time via local improvement steps. 
Best reply algorithms with respect to the social cost function instead of the personal cost function have also been well studied for the online setting. For weighted resource allocation problems and $d=1$, there is a tight bound of 5.83. The upper bound is due to Awerbuch et al.~\cite{AwerbuchAGKKV95} and the lower bound due to Caragiannis et al.~\cite{caragiannis2011tight}. For larger $d$, there is a lower bound of $\Omega((d / \ln 2)^{d+1})$ by Caragiannis et al.~\cite{Caragiannis08} and an upper bound of $\mathcal{O}((d / \ln 2)^{d+1})$ by Bil\`o and Vinci~\cite{BiloV17}. For unweighted resource allocation problems there is a tight bound of 5.66 in the linear case. The upper bound is due to Suri et al.~\cite{suri2007selfish} and the lower bound due to Caragiannis et al.~\cite{caragiannis2011tight}. For larger $d$, there is a lower bound of $(d+1)^{d+1}$ by Farzad et al.~\cite{Farzad08}. Up to our knowledge, there is no known upper bound that separates the unweighted case from the weighted case.
\subsection{Our Contribution}

We show upper bounds on the competitive ratio of the best reply algorithm in online resource allocation problems with cost functions that are polynomials of maximal degree $d$ with non-negative coefficients. 
%
For unweighted instances, we provide the first bound that hold for any fixed value of $d$. Prior to our results, non-trivial upper bounds were only known for the cases $d=1$ by Christodoulou et al.~\cite{ChristodoulouMS12} and for the cases $d \in \{2,3\}$ by Bil\`o~\cite{DBLP:journals/mst/Bilo18}. To the best of our knowledge, despite the wealth of results for weighted problems, prior to this work, no competitive ratio for any $d>3$ or the asymptotic behavior as $d \to \infty$ has been known that holds specifically for the unweighted case. We close this gap and show that the best reply algorithm is $d(\Xi_dd)^{d+1}$ competitive, where $\Xi_d \leq 1 / W(\frac{1.2d-1}{d+1})$. Here $W$ is the Lambert-W function on $\mathbb{R}_{\geq 0}$. Thus, we obtain a concrete factor that holds for any $d$. We further show that $\lim_{d \rightarrow \infty}{\Xi_d} \approx 1.523$ thus also giving the asymptotic behavior of the bound.

For weighted resource allocation problems, previous work \cite{AwerbuchAGKKV95,BiloV17} has established an upper bound of $\Psi_d^{d+1}$, where $\Psi_d$ is defined to be the solution to the equation $(d+1)(x+1)^d = x^{d+1}$. We choose to add a short proof of this result in the appendix since the proof of Awerbuch et al.~\cite{AwerbuchAGKKV95} does only consider singletons with identical resource cost functions and the paper by Bil{\`o} and Vinci~\cite{BiloV17} mentions the result without a proof. We also show that $\Psi_d \leq d / W(\frac{d}{d+1})$. This refines an upper bound of $\mathcal{O}(1.77^dd^{d+1})$ obtained by Harks et al.~\cite{HarksHPV07}. Note that in the limit, both results coincide as $\lim_{d\rightarrow\infty}\Psi_d \approx 1.77$.

Both our main result concerning unweighted games and our proofs for weight\-ed games allow for the first time to separate the competitive ratios of greedy personal best replies for unweighted and weighted problems, respectively. While for weighted games the competitive ratio is about $(1.77d)^{d+1}$, for unweighted games it is bounded from above by $(1.523d)^{d+1}$ for $d$ large enough.

Due to space constraints, some proofs are deferred to the appendix.

\section{Preliminaries}

We consider online algorithms for unsplittable resource allocation problems. Let $R$ be a finite set of resources $r$ each endowed with a non-negative cost function $c_r : \RR_{\geq 0} \to \RR_{\geq 0}$. There is a sequence of requests $\requests = (w_1, \SS_1), \dots, (w_n, \SS_n)$. At time step $i$, the existence of request $(w_i, \SS_i)$ is revealed, where $w_i$ is its weight, and $\SS_i \subseteq 2^R$ is the set of feasible allocations. If $w_i=1$ for all $i \in \{1,\dots,n\}$, we call the instance \emph{unweighted}. Upon arrival of request~$i$, an allocation $S_i \in \SS_i$ has to be fixed irrevocably by an online algorithm.

We use the notation $[n]=\{1,\dots,n\}$. For $i \in [n]$, let $\vec{\SS}_{\leq i} = \SS_1 \times \dots \times \SS_{i-1} \times \SS_i$ be the set of all feasible allocation vectors up to request~$i$. For a resource $r \in R$ and an allocation vector $\vec S_{\leq i} = (S_1,\dots,S_{i-1},S_i) \in \vec{\SS}_{\leq i}$ we denote the load of $r$ under $\vec S_{\leq i}$ by $w_r(\vec S_{\leq i})$. In the following, we write $\SS$ and $\vec S$ instead of $\SS_{\leq n}$ and $\vec S_{\leq n}$. The total cost of an allocation vector $\vec S$ is defined as \[C(\vec S) = \sum_{i \in [n]}\sum_{r \in S_i}w_ic_r(w_r(\vec S)) = \sum_{r \in R}w_r(\vec S)c_r(w_r(\vec S))\;.\]
Given a sequence of requests $\requests$, the offline optimal solution value is denoted by $\OPT(\requests) = \min_{\vec S \in \vec{\SS}} C(\vec S)$. As a convention, the allocations used in the optimal solution are denoted by $\vec S^*$.
For a sequence of requests $\requests$ and $i \in [n]$, denote by $\requests_{\leq i} = (w_1,\SS_1), \dots, (w_{i-1},\SS_{i-1}), (w_i, \SS_i)$ the subsequence of requests up to request~$i$. An online algorithm $\ALG$ is a family of functions $f_i : \requests_{\leq i}\to \SS_i$ mapping partial requests up to request~$i$ to a feasible allocation for request~$i$. For a sequence of requests $\mathcal{R}$, the cost of an online algorithm $\ALG$ with a family of functions $(f_i)_{i \in N}$ is the given by
$\ALG(\requests) = C(\vec S)$ where $\vec S = S_1 \times \dots \times S_n$ and $S_i = f_i(\requests_{\leq i})$.

We measure the performance of an online algorithm by its competitive ratio which is $\rho = \sup_{\requests} \ALG(\requests) / \OPT(\requests)$ where the supremum is taken over all finite sequences of requests for which $\OPT(\requests) > 0$. When the sequence of requests $\requests$ is clear from context, we write $\ALG$ and $\OPT$ instead of $\ALG(\requests)$ and $\OPT(\requests)$.

We analyze a very easy and natural online algorithm, which we call best reply algorithm. Let again denote $\vec{S}_{< i}$ the allocation vector of the algorithm before the $i$-th request is revealed. Then, $w_i c_r(w_r(\vec{S}_{< i}))$ is the per request cost at the arrival of request $i$ on resource $r$. Upon arrival of request~$i$, the best reply algorithm chooses an allocation $S_i \in \SS_i$ that minimizes the cost of that request, i.e.\ we choose some allocation $S_i$ such that,
\begin{align}
\label{eq:private-cost-greedy-constraint}
\sum_{r \in S_i}{w_i c_r(w_r(\vec{S}_{<i}) + w_i)} \leq \sum_{r \in S'_i}{w_i c_r(w_r(\vec{S}_{<i}) + w_i)}\;,
\end{align}
for all other feasible allocations $S'_i \in \SS_i$. This choice is motivated by best response moves in the corresponding congestion game. Note that the response steps used by the best reply algorithm are typically tractable and therefore the competitive ratio $\rho$ is also the approximation factor for the corresponding approximation algorithm.

\section{Unweighted Resource Allocation Problems} \label{sec:unweighted-nash}

In this section, we consider unweighted resource allocation problems with polynomial cost functions and derive an upper bound on the competitive ratio for the best reply algorithm. We give a general analysis and show a bound of $d(\Xi_d d)^{d+1} \in \mathcal{O}(((\Xi_d+\epsilon) d)^{d+1})$ for cost functions in $\mathcal{C}_d$. Here, $\mathcal{C}_d$ denotes the set of polynomials with non-negative coefficients and maximum degree $d$. $\Xi_d$ is the unique solution to the equation $d\left(2x e^{1/x} + x^2 - e^{2/x} - x^2 e^{1/x}\right) = e^{\frac{2}{x}}$ and $\lim_{d\rightarrow \infty} \Xi_d \approx 1.523$. This implies an exponential gap between the weighted and the unweighted case since for the weighted case the competitive ratio is about $(1.77d)^{d+1}$. 

Recall that the sequence of requests is given by $\requests = (1,\SS_1), \dots , (1,\SS_n)$ for unweighted resource allocation problems, i.e., $w_i = 1$ for all $i \in [n]$. This implies that the cost functions $c_r$ directly define the per request cost on this resource, since $w_ic_r = c_r$ in this case. The cost functions $c_r$ now only depend on the \emph{number} of requests that have chosen some resource $r$.

We use the definition of the algorithm in \eqref{eq:private-cost-greedy-constraint} to derive an optimization problem for all resources such that the solution to the problem relates to the competitive ratio of the algorithm. This approach is also known as the $(\lambda, \mu)$-smoothness framework. In this section, the constraints in the optimization problem only have to hold for integral $x=w_r(\vec S^*)$ and $y=w_r(\vec S)$.

To retain generality, let $d_r \leq d$ denote the maximal degree of the cost function $c_r$. We show that for every resource $r$, the smoothness condition is fulfilled for $\lambda_d =(\Xi_d d)^{d+1}$ and  $\mu_d=1-\frac{1}{d}$. Towards this end, we make a case distinction between (1) $x=0$, (2) $x\neq 0, y\leq \frac{1}{W(1.27)} d_r$, (3) $x=1, y>\frac{1}{W(1.27)}d_r$ and (4) $x\geq 2, y>\frac{1}{W(1.27)}d_r$.

\begin{theorem}\label{theorem:unweighted-nash-upper}
For any $d \in \mathbb{N}$, the best reply algorithm is $d(\Xi_d d)^{d+1}$ competitive for unweighted resource allocation problems with cost functions in $\mathcal{C}_d$. Here, $\Xi_d$ is the unique solution to \[d\left(2x e^{1/x} + x^2 - e^{2/x} - x^2 e^{1/x}\right)=e^\frac{2}{x}\] and $\Xi_d \leq \frac{1}{W\left(\frac{1.20d}{d+1}\right)}$.
\end{theorem}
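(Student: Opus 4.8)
The plan is to follow the $(\lambda,\mu)$-smoothness framework exactly as the authors set it up in the paragraph preceding the theorem. The key structural fact is that the best reply algorithm, by its defining inequality \eqref{eq:private-cost-greedy-constraint}, allocates request~$i$ so as to minimize its personal cost. Summing this inequality over all requests, where each request's alternative allocation is taken to be its allocation $S_i^*$ in the offline optimum, yields the standard chain bounding $\ALG = C(\vec S)$ against $\OPT = C(\vec S^*)$. The crux is reducing the global comparison to a \emph{per-resource} inequality: one wants to show that for every resource $r$ with cost function $c_r$ of degree $d_r \le d$, and for all relevant nonnegative integers $x = w_r(\vec S^*)$ and $y = w_r(\vec S)$, a smoothness inequality of the form
\[
c_r(y)\,x \;\le\; \lambda_d\, x\, c_r(x) + \mu_d\, y\, c_r(y)
\]
holds with $\lambda_d = (\Xi_d d)^{d+1}$ and $\mu_d = 1 - \tfrac1d$. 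Once this per-resource inequality is established, summing over $r$ and rearranging gives $\ALG \le \frac{\lambda_d}{1-\mu_d}\OPT = d\,(\Xi_d d)^{d+1}\OPT$, which is precisely the claimed competitive ratio since $\frac{1}{1-\mu_d} = d$.

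First I would reduce to monomials: because $\mathcal{C}_d$ consists of polynomials with non-negative coefficients and the desired inequality is linear in $c_r$, it suffices to verify the smoothness inequality for each monomial $c_r(t) = t^{d_r}$ with $d_r \le d$ separately. Next I would carry out the four-way case distinction flagged in the text, on integer values of $x$ and $y$: (1)~$x=0$, where the left side vanishes and the inequality is trivial; (2)~$x \ge 1$ with $y$ small, namely $y \le \frac{1}{W(1.27)}d_r$, where the term $\lambda_d\, x\, c_r(x)$ alone dominates because $\lambda_d$ is exponentially large; (3)~$x=1$ with $y$ large, $y > \frac{1}{W(1.27)}d_r$; and (4)~$x \ge 2$ with $y$ large. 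The small-$y$ cases should be the easy ones, where one simply checks that the chosen $\lambda_d$ is large enough. The large-$y$ cases are where the sharp constant $\Xi_d$ enters.

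The main obstacle will be the large-$y$ analysis, cases (3) and (4), where one must extract the tightest possible constant. Here the strategy is to treat $x$ and $y$ as continuous and optimize: fixing the monomial degree $d_r$, one rearranges the target inequality into the form $\lambda_d \ge g(x,y)$ for an explicit ratio $g$, and then maximizes $g$ over the feasible region. Taking the derivative in $y$ and setting it to zero will produce a critical relation; substituting $y/x = z$ or a similar reduction should collapse the two-variable optimization to a one-variable problem. The transcendental optimality condition that emerges is exactly what defines $\Xi_d$ through the stated equation $d(2xe^{1/x}+x^2-e^{2/x}-x^2e^{1/x}) = e^{2/x}$, and bounding its solution via the Lambert-$W$ function gives the clean closed form $\Xi_d \le \frac{1}{W(\frac{1.20d}{d+1})}$. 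The constant $\frac{1}{W(1.27)}$ defining the case split is presumably chosen so that the small-$y$ and large-$y$ regimes meet, i.e.\ so that the bound obtained from the trivial cases patches continuously into the optimized bound; verifying that the worst case over all four regions is genuinely captured by $\Xi_d$, rather than by a boundary value of the case split, is the delicate calibration step. Finally, I would confirm the asymptotic claim $\lim_{d\to\infty}\Xi_d \approx 1.523$ by analyzing the defining equation as $d\to\infty$, where the relation simplifies and can be solved numerically.
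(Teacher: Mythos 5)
There is a genuine gap, and it sits exactly at the step you call the ``crux.'' The per-resource inequality you propose to verify, $c_r(y)\,x \le \lambda_d\,x\,c_r(x) + \mu_d\,y\,c_r(y)$, is not the inequality that the greedy chain produces, and verifying it would not bound $\ALG$. Summing the greedy condition \eqref{eq:private-cost-greedy-constraint} over requests, with the telescoping identity $C(\vec S)=\sum_i\sum_{r\in S_i}\bigl((w_r(\vec S_{<i})+1)^{d_r+1}-w_r(\vec S_{<i})^{d_r+1}\bigr)$, produces a left-hand side of the form $(d_r+1)(y+1)^{d_r}x$ \emph{minus} a correction term: the paper expands the marginal increase binomially, applies the greedy inequality only to the leading part $(d_r+1)(w_r(\vec S_{<i})+1)^{d_r}$, and retains the negative sum $\sum_{j=1}^{y}\sum_{k=0}^{d_r-1}(j-1)^k\binom{d_r+1}{k}(d_r-k)$, which it bounds from below by an integral. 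This is the one place where unweightedness is actually used --- the loads $w_r(\vec S_{<i})$ take each integer value $0,1,\dots,y-1$ exactly once on resource $r$ --- and it yields the refined condition \eqref{eq:lambdaMuUnweighted}, whose left-hand side is $(d_r+1)(y+1)^{d_r}x - y^{d_r+1} + \frac{y^{d_r+2}-(y-1)^{d_r+2}+d_r+1}{d_r+2}$. Your plan omits both the factor $(d_r+1)$ and the shift $y\mapsto y+1$ (so your inequality is too weak to imply the theorem) and, more importantly, omits the negative correction (so your optimization cannot produce the constant $\Xi_d$). Indeed, if you run your proposed optimization on the correct leading term $(d_r+1)(y+1)^{d_r}x$ \emph{without} the correction, the first-order analysis collapses to the weighted case and the best constant you can extract is $\Psi_d\to 1/W(\tfrac{d}{d+1})\approx 1.77$, not $\Xi_d\approx 1.523$; the terms $c^2 d_r(1-e^{-1/c})$ in the transcendental equation defining $\Xi_d$ come precisely from $\frac{y^{d_r+2}-(y-1)^{d_r+2}}{d_r+2}$ after the substitution $y=cd_r$.

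Two smaller symptoms of the same omission: your case $x=0$ is not trivial in the correct setup --- the correction terms survive at $x=0$ and one must prove $\frac{y^{d_r+2}-(y-1)^{d_r+2}+d_r+1}{d_r+2}\le y^{d_r+1}$ (Lemma \ref{lem:unweightedCaseOne}), which is established by induction on $d_r$; and the case split in the paper separates $x=1$ from $x\ge 2$ not merely for calibration but because at $x=1$ the term $-y^{d_r+1}$ combines with $-\mu_d y^{d_r+1}$ to give $-(2-\tfrac1d)y^{d_r+1}$, which is essential for the bound there, while for $x\ge 2$ one instead controls $z=c/x$ via a separate Lambert-$W$ estimate. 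Your outline of the remaining scaffolding (monomial reduction by splitting resources, the four regimes with threshold $y\le d_r/W(1.27)$, the continuous relaxation $y=cd_r$, and the Lambert-$W$ bracketing of $\Xi_d$) does match the paper, but without the telescoping-plus-integral refinement the argument as proposed cannot reach the claimed bound. You also do not handle $d=1$, which the paper dispatches via the known $4.24$ bound of Christodoulou et al.\ since the general machinery assumes $d\ge 2$.
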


\begin{proof}
First, note that the best reply algorithm is $4.24$ competitive for $d=1$ due to the work of Christodoulou, Mirrokni and Sidiropoulos \cite{ChristodoulouMS12}. We show in Lemma \ref{lem:unweighted-interval} that $\Xi_1 \geq \frac{1}{W(1.27/2)}\approx 2.39$, i.e.\ we can assume $d\geq 2$.

For the proof, we will assume wlog.\ that resource $r$ has cost equal to $x^{d_r}$ with $d_r\leq d$.  If this is not the case, we can achieve this setting by splitting up resources. Additionally, we assume that there are no resources $r$ with $w_r(\vec S) = 0$ and $w_r(\vec S^*) > 0$. If this is not the case, ignoring any contribution of these resources to $C(S^*)$ does only increase the competitive ratio.

The algorithm minimizes the current request's cost in each step, that is,
\[\sum_{r \in S_i}{\left(w_r(\vec S_{<i})+1\right)^{d_r}} \leq \sum_{r \in S^*_i}{\left(w_r(\vec S_{<i})+1\right)^{d_r}}.\]
The total cost can be written as the sum of the marginal increases to the total cost functions, i.e.\ we can also write
\begin{align*}
C(\vec S) &= \sum_{i \in [n]}\sum_{r \in S_i}{\left(\left(w_r(\vec S_{<i})+1\right)^{{d_r}+1}-w_r(\vec S_{<i})^{d_r+1}\right)}\\
&=\sum_{i \in [n]}\sum_{r \in S_i}\left(\sum_{k=0}^{d_r}w_r(\vec S_{<i})^k \binom{d_r+1}{k} + \left(w_r(\vec S_{<i})+1\right)^{d_r} \cdot (d_r+1)\right)\\
&\qquad - \sum_{i \in [n]}\sum_{r \in S_i}\left(\left(w_r(\vec S_{<i})+1\right)^{d_r}\cdot (d_r+1)\right)\\
&= \sum_{i \in [n]}\sum_{r \in S_i}\left((d_r+1)\left(w_r(\vec S_{<i})+1\right)^{d_r} + \sum_{k=0}^{d_r}{w_r(\vec S_{<i})^k\frac{(d_r+1)!}{k! (d_r+1-k)!}}\right)\\
&\qquad - \sum_{i \in [n]}\sum_{r \in S_i}\sum_{k=0}^{d_r}\left(w_r(\vec S_{<i})^k\frac{d_r!}{k!(d_r-k)!}\cdot (d_r+1)\right),
\end{align*}
where we used that $\sum_{k=0}^{d_r+1}{a^k\binom{d_r+1}{k}} = (a+1)^{d_r+1}$ in the first step.
We get
\begin{align*}
C(\vec S) &\leq \sum_{i \in [n]}{\sum_{r \in S^*_i}{(d_r+1)\left(w_r(\vec S_{<i})+1\right)^{d_r}}}\\
&\qquad- \sum_{i \in [n]}{\sum_{r \in S_i}{\sum_{k=0}^{d_r-1}{w_r(\vec S_{<i})^k\binom{d_r+1}{k}(d_r-k)}}}\;,
\end{align*}
by using the definition of the algorithm. In the following, we use $w_r(\vec S_{<i})\leq w_r(\vec S)$ and  that $w_r(\vec S_{<i})^k$ can be written as $j-1$ in the second sum, if $i$ is the $j$-th request that has been allocated to $r$ in $\vec S$. We get

\begin{align*}
C(\vec S) &\leq \sum_{r \in R}{(d_r\!+\!1)w_r(\vec S^*)(w_r(\vec S)\!+\!1)^{d_r}}\\
&\qquad- \sum_{r \in R}{\sum_{j=1}^{w_r(\vec S)}{\sum_{k=0}^{d_r-1}{(j-1)^k \binom{d_r\!+\!1}{k}(d_r\!-\!k)}}}
\end{align*}
Now we bound the inner sum of the triple sum with the integral, which leads to
\begin{align*}
&\sum_{r \in R}\sum_{j=1}^{w_r(\vec S)}{\sum_{k=0}^{d_r-1}{(j-1)^k \binom{d_r+1}{k}(d_r-k)}}\\
&\quad\geq \sum_{r \in R}{\sum_{k=0}^{d_r-1}{\binom{d_r+1}{k}(d_r-k)\int_{1}^{w_r(\vec S)}{(j-1)^k dj}}}\\
&\quad= \sum_{r \in R}{\sum_{k=0}^{d_r-1}{\binom{d_r+1}{k}(d_r-k)\frac{1}{k+1}\left(w_r(\vec S)-1\right)^{k+1}}}\\
&\quad= \sum_{r \in R}{\sum_{k=1}^{d_r}{\binom{d_r+1}{k}\frac{d_r-k+1}{d_r-k+2}\left(w_r(\vec S)-1\right)^{k}}}\;.
\end{align*}
At this point, we split the sum and apply a variant of the binomial theorem used above as well as $\sum_{k=0}^{d+1} (a-1)^k\binom{d+2}{k} = a^{d+2} - (a-1)^{d+2}$. We get
\begin{align*}
\sum_{r \in R}\sum_{k=1}^{d_r}&\binom{d_r+1}{k}\frac{d_r-k+1}{d_r-k+2}\left(w_r(\vec S)-1\right)^{k}\\
&= \sum_{r \in R}\sum_{k=0}^{d_r+1} \left(\left(w_r(\vec S)-1\right)^{k}\binom{d_r+1}{k}\frac{d_r-k+1}{d_r-k+2}\right) - \frac{d_r+1}{d_r+2}\\
&= \sum_{r \in R}\sum_{k=0}^{d_r+1} \left(\left(w_r(\vec S)-1\right)^{k}\binom{d_r+1}{k}\left(1-\frac{1}{d-k+2}\right)\right)-\frac{d_r+1}{d_r+2}\\
&= \sum_{r \in R} w_r(\vec S)^{d+1} - \sum_{k=0}^{d_r+1}\left(\left(w_r(\vec S) -1\right)^k\binom{d_r+2}{k}\frac{1}{d_r+2}\right) - \frac{d_r+1}{d_r+2}\\
&=\sum_{r\in R} w_r(\vec S)^{d+1} - \frac{(w_r(\vec S)^{d_r+2} - (w_r(\vec S) - 1)^{d_r+2})}{d_r+2} - \frac{d_r+1}{d_r+2}\;.
\end{align*}

In Proposition \ref{prop:unweightedNash}, we will show that choosing $\lambda_d =( \Xi_d d)^{d+1}$ and $\mu_d = 1-\frac{1}{d}$ fulfills the condition
\begin{align}
(d_r\!+\!1)(y\!+\!1)^{d_r}x - y^{d_r+1} + \frac{y^{d_r+2}-(y\!-\!1)^{d_r\!+\!2} + d_r\!+\!1}{d_r+2} \leq \lambda_d x^{d_r+1} + \mu_d y^{d_r+1}
\label{eq:lambdaMuUnweighted}
\end{align} for all $x \in \mathbb{N}_{\geq 0}, y \in \mathbb{N}_{\geq 1}$ and $d_r \leq d$. Note that we will only show the inequality for $y \geq 1$. However, omitting resources with $w_r(S)=0$ can only increase the approximation bound. Applying this to all $r \in R$ yields
\[C(\vec S) \leq \lambda_d C(\vec S^*) + \mu_d C(\vec S),\]
that is, a competitive ratio of $\frac{\lambda_d}{1-\mu_d}$. Thus, we seek to minimize $\frac{\lambda_d}{1-\mu_d}$ subject to the inequality \eqref{eq:lambdaMuUnweighted}. We will show in Proposition \ref{prop:unweightedNash} that there are $\lambda_d$ and $\mu_d$ such that $\frac{\lambda_d}{1-\mu_d}$ is upper bounded by $d(\Xi_d d)^{d+1}$. In Lemma~\ref{lem:unweighted-interval}, we show that $\frac{1}{W\left(\frac{1.27d-1}{d+1}\right)} \leq \Xi_d \leq \frac{1}{W\left(\frac{1.20d-1}{d+1}\right)}$. For a plot of the upper and lower bounds on $\Xi_d$, see Figure \ref{fig:XidPlot}. Note that a numerical analysis shows that $\lim_{d \rightarrow \infty}{\sqrt[d+1]{d} \Xi_d} \approx 1.523$, i.e.\ the value for $\Xi_d$ seems to be quite close to our lower bound for large $d$, since $\lim_{d \rightarrow \infty}{\frac{1}{W(\frac{1.27d-1}{d+1})}} \approx 1.520$.
\end{proof}

It remains to show that there are choices for $\lambda_d$ and $\mu_d$ that give the claimed competitive ratio. The proof of the following proposition relies on several technical lemmas which are proven in the appendix.


\begin{proposition}
\label{prop:unweightedNash}
For any $1 \leq d_r \leq d$, there are $\lambda_d$, $\mu_d$ with
\[\frac{\lambda_d}{1-\mu_d}\leq d(\Xi_d d)^{d+1},\]
and $(d_r+1)(y+1)^{d_r} x - y^{d_r+1} + \frac{y^{d_r+2}-(y-1)^{d_r+2}}{d_r+2} + \frac{d_r+1}{d_r+2} \leq \lambda_d x^{d_r+1} + \mu_d y^{d_r+1}$ for all $x \in \mathbb{N}_{\geq 0}, y \in \mathbb{N}_{\geq 1}$ and $d_r \leq d$, where $\Xi_d$ is the solution to the equation $d(2xe^{\frac{1}{x}} + x^2 - e^\frac{2}{x} - x^2 e^\frac{1}{x}) = e^\frac{2}{x}$.
\end{proposition}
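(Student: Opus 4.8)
The plan is to use the values $\lambda_d=(\Xi_d d)^{d+1}$ and $\mu_d=1-\tfrac1d$ fixed before the statement; then the bound on the ratio is an identity, $\frac{\lambda_d}{1-\mu_d}=\frac{(\Xi_d d)^{d+1}}{1/d}=d(\Xi_d d)^{d+1}$, so the entire task is the smoothness inequality. I would prove it by the announced four-way case distinction on $(x,y)$, governed by the threshold $y_0:=d_r/W(1.27)$.

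First I would simplify the left-hand side. From $\frac{y^{d_r+2}-(y-1)^{d_r+2}}{d_r+2}=\int_{y-1}^{y}t^{d_r+1}\,dt\le y^{d_r+1}$ the pair $-y^{d_r+1}+\frac{y^{d_r+2}-(y-1)^{d_r+2}}{d_r+2}$ is non-positive, and at $y=1$ the constant $-1+\frac1{d_r+2}+\frac{d_r+1}{d_r+2}$ vanishes. This settles case (1), $x=0$: the left-hand side is then at most $\frac{d_r+1}{d_r+2}<1\le\mu_d y^{d_r+1}$ for $y\ge2$, and it equals $0$ for $y=1$.

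For $x\ge1$ I would exploit that $x\mapsto\lambda_d x^{d_r+1}-(d_r+1)(y+1)^{d_r}x$ is convex with real minimiser $(y+1)\lambda_d^{-1/d_r}$. Since $\lambda_d^{1/d_r}$ is of order $\Xi_d d$, comparable to $y_0$, this minimiser is at most $1$ precisely in the range $y\le y_0$; hence there the inequality is hardest at $x=1$, which is case (2). For $y>y_0$ the two remaining cases are genuinely different: at $x=1$ (case (3)) the negative term $-(2-\tfrac1d)y^{d_r+1}$, obtained after absorbing $\mu_d y^{d_r+1}$, dominates and the left-hand side is negative, while for $x\ge2$ (case (4)) the extra growth of $\lambda_d x^{d_r+1}$ near the threshold and of $\mu_d y^{d_r+1}$ for large $y$ outpaces the positive term $(d_r+1)(y+1)^{d_r}x$.

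This leaves the substantive case $x=1$, $1\le y\le y_0$, namely $\varphi(y):=(d_r+1)(y+1)^{d_r}-(2-\tfrac1d)y^{d_r+1}+\frac{y^{d_r+2}-(y-1)^{d_r+2}}{d_r+2}+\frac{d_r+1}{d_r+2}\le\lambda_d$, with the binding load near $y\approx\Xi_d d$; I expect this single-variable estimate to be the main obstacle, and it is where the equation for $\Xi_d$ is forced. The plan is to relax $y$ to a real variable, use $(y+1)^{d_r}\approx y^{d_r}e^{d_r/y}$ and $(y-1)^{d_r+2}\approx y^{d_r+2}e^{-(d_r+2)/y}$, and set $y=xd$ so that the exponents become $e^{1/x}$ and $e^{-1/x}$; then $\Xi_d$ is the value at which the resulting continuous upper bound on $\varphi$ meets $\lambda_d=(\Xi_d d)^{d+1}$. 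Clearing the factor $e^{-1/x}$ makes this balance quadratic in $z=e^{1/x}$, which is exactly why $e^{2/x}=z^2$ appears alongside $e^{1/x}$, and rearranging gives $d(2xe^{1/x}+x^2-e^{2/x}-x^2e^{1/x})=e^{2/x}$. Passing from the continuous estimate back to integer $x,y$ and controlling everything uniformly over $1\le d_r\le d$ (the extremal case being $d_r=d$) is the remaining technical work; I would carry it out via the appendix lemmas together with Lemma~\ref{lem:unweighted-interval}, the latter also certifying $\frac1{W(\frac{1.27d-1}{d+1})}\le\Xi_d\le\frac1{W(\frac{1.20d-1}{d+1})}$.
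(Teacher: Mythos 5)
Your choice of $\lambda_d=(\Xi_d d)^{d+1}$, $\mu_d=1-\frac1d$ and the treatment of $x=0$ match the paper, but there is a genuine error in how you assign the hard work to the regions at $x=1$: you have the geometry inverted. You claim that for $x=1$, $y>y_0:=d_r/W(1.27)$ the left-hand side is negative, and you place the binding load $y\approx \Xi_d d$ inside $y\le y_0$. Both halves are false. After the exponential relaxation with $y=cd_r$, the relevant factor is $(d_r+1)e^{1/c}-\bigl(2-\frac1d\bigr)cd_r+c^2d_r\bigl(1-e^{-1/c}\bigr)$; it is monotonically decreasing in $c$ (Lemma~\ref{lem:Case3TermDecreasing}) and vanishes only at $c=\Xi_{d_r}$, and since $\Xi_{d_r}\ge \frac{1}{W(1.27)}$ (Lemma~\ref{lem:unweighted-interval}) it is strictly \emph{positive} on the whole window $c\in\bigl(\frac{1}{W(1.27)},\Xi_{d_r}\bigr)$ — the paper even evaluates it at $c=\frac{1}{W(1.27)}$ and obtains roughly $0.0042d+3.46>0$. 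A short computation of the logarithmic derivative of $(cd_r)^{d_r}$ times this factor shows the product is still increasing at $c=\frac{1}{W(1.27)}$, so the maximizer lies strictly inside the window you dismissed as negative; symmetrically, $y\approx\Xi_d d$ cannot lie in your ``substantive'' region because $\Xi_d d> d/W(1.27)\ge y_0$. The correct arrangement is the paper's, which is the mirror image of yours: for $y\le y_0$ the crude bound $\mathrm{LHS}\le (d_r+1)(y+1)^{d_r}\le\bigl(\frac{d}{W(1.27)}+1\bigr)^{d+1}\le(\Xi_d d)^{d+1}$ suffices (this last step is exactly Lemma~\ref{lem:Case2}), while the defining equation of $\Xi_d$ is forced in the region $y>y_0$, where $\Xi_{d_r}$ serves as an \emph{upper bound on the maximizing $c$} (with $\Xi_{d_r}d_r\le\Xi_d d$ by Lemma~\ref{lem:dXidIncreasing}), not as the point where a continuous estimate ``meets'' $\lambda_d$.

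A second, smaller gap: your case (4) ($x\ge2$, $y>y_0$) is disposed of with the assertion that $\lambda_d x^{d_r+1}$ and $\mu_d y^{d_r+1}$ ``outpace'' the positive term, but this is where the paper does real work — it substitutes $z=c/x$, shows the maximizing $z$ satisfies $z\le \frac{1}{2W\left(\frac{d_r-1}{2(d_r+1)}\right)}$, invokes Lemma~\ref{lem:zLemma} to evaluate at that endpoint, and still needs separate numerical checks for $d\in\{2,3,4,5\}$. Since the final bound is tight up to small constants, no soft domination argument can replace this. Your substantive single-variable analysis (real relaxation, $(y+1)^{d_r}\le y^{d_r}e^{d_r/y}$, the quadratic-in-$e^{1/c}$ rearrangement producing $d\left(2xe^{1/x}+x^2-e^{2/x}-x^2e^{1/x}\right)=e^{2/x}$) is exactly the paper's Case~3 machinery, so the repair is to move it to the region $y>y_0$ and supply the missing $x\ge2$ argument.
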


\begin{proof}
Let $1\leq d_r \leq d$. In this proof, we will distinguish 4 cases. First, we show that the inequality holds for $x=0$. Then, we show the result for $x\neq 0$, $y\leq \frac{d_r}{W(1.27)}$. Third, we consider the case $x=1$, $y>\frac{d_r}{W(1.27)}$ and finally we finish the proof with the case $x\geq 2$, $y>\frac{d_r}{W(1.27)}$. For all 4 cases, we choose $\mu_d = 1-\frac{1}{d}$ and $\lambda_d = (\Xi_d d)^{d+1}$.

\paragraph*{Case 1: $x=0$.\newline}
In this case, $(d_r+1)(y+1)^{d_r}x = 0$. We show in Lemma \ref{lem:unweightedCaseOne} in the appendix that 
$\frac{y^{d_r+2}-(y-1)^{d_r+2}+d_r+1}{d_r+2} \leq y^{d_r+1}  \quad \forall y \in \mathbb{N}_{\geq 1}$, i.e.\ we get
\[(d_r+1)(y+1)^{d_r} x - y^{d_r+1} + \frac{y^{d_r+2}-(y-1)^{d_r+2}}{d_r+2} + \frac{d_r+1}{d_r+2} \leq 0\] 
for $x=0$, all $d_r \in \mathbb{N}_{\geq 1}$ and all $y \in \mathbb{N}_{\geq 1}$.
This finishes the proof of Case 1.

For Cases 2-4, we can assume $x\geq 1$. In order to show that the constraint is fulfilled for all $x,y \in \mathbb{N}_{\geq 1}$  for the choice $\mu_d = 1 - \frac{1}{d}$, $\lambda_d = (\Xi_d d)^{d+1}$, note that the constraint is equivalent to
\begin{align*}
&\max_{x,y \in \mathbb{N}_{\geq 1}}\left\{\frac{(d_r+1)(y+1)^{d_r} x - y^{d_r+1} + \frac{y^{d_r+2}-(y-1)^{d_r+2}+d_r+1}{d_r+2}- \frac{d-1}{d} y^{d_r+1}}{x^{d_r+1}}\right\}\\ &\qquad\leq (\Xi_d d)^{d+1}\;.
\end{align*}

\paragraph*{Case 2: $x \neq 0, y\leq \frac{d_r}{W(1.27)}$.\newline}
First, we will reconsider the inequality $\frac{y^{d_r+2}-(y-1)^{d_r+2}+d_r+1}{d_r+2} \leq  y^{d_r+1}$, which has been proven in Case 1. 
We get
\begin{align*}
&\max_{\substack{x,y\in \mathbb{N}_{\geq 1},\\y\leq \frac{d_r}{W(1.27)}}}\left\{\frac{(d_r+1)(y+1)^{d_r} x - y^{d_r+1} + \frac{y^{d_r+2}-(y-1)^{d_r+2} + d_r+1}{d_r+2}- \frac{d-1}{d}y^{d_r+1}}{x^{d_r+1}}\right\}\\
&\leq \max_{\substack{x,y\in \mathbb{N}_{\geq 1},\\y\leq \frac{d_r}{W(1.27)}}}\left\{\frac{(d_r+1)(y+1)^{d_r} x}{x^{d_r+1}}\right\}\leq \max_{\substack{x,y\in \mathbb{N}_{\geq 1},\\y\leq \frac{d_r}{W(1.27)}}}\left\{(d_r+1)(y+1)^{d_r}\right\}\\
&\leq \biggl(\frac{d_r}{W(1.27)}+1\biggr)^{d_r+1}\\
&\leq \biggl(\frac{d}{W(1.27)}+1\biggr)^{d+1} \leq (\Xi_d d)^{d+1},
\end{align*}
where the last inequality is shown in Lemma \ref{lem:Case2} in the appendix.

\paragraph*{Case 3: $x =1, y > \frac{d_r}{W(1.27)}$.\newline}

We start the proof of this case by plugging in $x=1$. This yields
\begin{align*}
M &:= \max_{y}\left\{(d_r\!+\!1)(y\!+\!1)^{d_r} - \left(2\!-\!\frac{1}{d}\right) y^{d_r+1} + \frac{y^{d_r+2}-(y\!-\!1)^{d_r+2} + d_r\!+\!1}{d_r+2}\right\}\\
&\leq\max_{y} \left\{y^{d_r} \left((d_r+1)\biggl(1+\frac{1}{y}\biggr)^{\!d_r} - \biggl(2-\frac{1}{d_r}\biggr) y + y^2\frac{1 - (1-\frac{1}{y})^{d_r+2}}{d_r+2}\right)\right\}\\
&\quad + \frac{d_r+1}{d_r+2}\;.
\end{align*}
We use that $(1+\frac{1}{y})^y \leq e \leq (1+\frac{1}{y})^{y+1}$ and get
\begin{align*}
M &\leq \max_{y}\left\{y^{d_r}\left((d_r+1)e^{d_r / y} - \biggl(2-\frac{1}{d}\biggr) y + y^2\frac{1 - e^{-\frac{d_r+2}{y-1}}}{d_r+2}\right)\right\} + \frac{d_r+1}{d_r+2}\;.
\end{align*}
In the following, we define $c = \frac{y}{d_r}$, replace $y$ with $c d_r$, and maximize over $c$ instead of $y$. We then obtain
\begin{align*}
M &< \max_{c}\left\{(cd_r)^{d_r}\left((d_r+1)e^\frac{1}{c} - \biggl(2\!-\!\frac{1}{d}\biggr) cd_r + c^2d_r\!\left(1\!-
 e^{-\frac{1}{c}}\right)\right)\right\}  + \frac{d_r\!+\!1}{d_r\!+\!2}.
\end{align*}
Here we use $\frac{1 - e^{-\frac{1}{c}}}{d_r} > \frac{1 - e^{-\frac{d_r+2}{cd_r-1}}}{d_r+2} \Leftrightarrow 2 > (d_r+2)e^{-\frac{1}{c}} - d_r e^{-\frac{d_r+2}{cd_r-1}}$. This is shown in Lemma \ref{lem:Case3TermSmallerTwo} in the appendix.

In the following, we will not explicitly calculate the maximizing $c$, but will derive an upper bound $\Xi_{d_r}$ on the maximum possible $c$, dependent on $d_r$. In order to do so, note that we show in Lemma \ref{lem:Case3TermDecreasing} in the appendix that the term $(d_r+1)(e)^\frac{1}{c} - (2-\frac{1}{d}) cd_r + c^2d_r(1 - e^{-\frac{1}{c}})$ is monotonically decreasing in $c$. Additionally, note that the whole expression is negative, if the term $(d_r+1)e^{1/c} - (2-\frac{1}{d}) cd_r + c^2d_r(1 - e^{-\frac{1}{c}})$ is negative. Thus, we conclude that the maximum is not attained for all $c$ such that 
\begin{align*}
d_r < -\frac{e^\frac{1}{c}}{e^\frac{1}{c} - (2-\frac{1}{d})c + c^2 - c^2e^{-\frac{1}{c}}} = \frac{e^\frac{2}{c}}{(2-\frac{1}{d})ce^\frac{1}{c} + c^2 - e^\frac{2}{c} - c^2e^\frac{1}{c}}\;.
\end{align*}
We conclude that $\Xi_{d_r}$ is defined such that
\[d_r = \frac{e^\frac{2}{\Xi_{d_r}}}{(2-\frac{1}{d})\Xi_{d_r}e^\frac{1}{\Xi_{d_r}} + \Xi_{d_r}^2 - e^\frac{2}{\Xi_{d_r}} - \Xi_{d_r}^2e^\frac{1}{\Xi_{d_r}}}\]
is an upper bound on the maximizing $c$.
Lemma \ref{lem:XidUnique} shows that $\Xi_{d_r}$ is unique. We conclude that $cd_r \leq \Xi_{d_r}d_r$, then we argue in Lemma \ref{lem:dXidIncreasing} in the appendix that $\Xi_{d_r}d_r$ is monotonically increasing in $d_r$ and thus, can be upper bounded by $\Xi_d d$. We use that the second part of the product is decreasing in $c$ to get 

\begin{align*}
&(\Xi_d d)^{d}\max_{c}\left\{\left((d_r+1)e^\frac{1}{c} - (2-\frac{1}{d}) cd_r + c^2d_r\left(1 - e^{-\frac{1}{c}}\right)\right)\right\} + \frac{d+1}{d+2}\\
& \leq (\Xi_d d)^{d} \left((d_r+1)e^{W(1.27)} - \frac{2d_r-1}{W(1.27)} + \frac{d_r\left(1 - e^{-W(1.27)}\right)}{W(1.27)^2}\right) + \frac{d+1}{d+2}\\
& \leq (\Xi_d d)^{d} \left(0.0042d + 3.46\right) + \frac{d+1}{d+2} \leq (\Xi_d d)^{d+1}\;,
\end{align*}
where the last inequality can be checked separately for $d=2$ and $d\geq 3$ with the help of the lower bound on $\Xi_d$ shown in Lemma \ref{lem:unweighted-interval}.
\paragraph*{Case 4: $x \geq 2, y > \frac{d_r}{W(1.27)}$.\newline}
We use the inequality $\frac{y^{d_r+2}-(y-1)^{d_r+2} + d_r + 1}{d_r+2} \leq y^{d_r+1}$ from Case 1 again and get
\begin{align*}
M&:= \max_{\substack{x\geq 2,\\y > \frac{d_r}{W(1.27)}}}\left\{\frac{(d_r\!+\!1)(y\!+\!1)^{d_r} x - y^{d_r+1} + \frac{y^{d_r+2}-(y-1)^{d_r+2}}{d_r+2}- \frac{d-1}{d} y^{d_r+1}}{x^{d_r+1}}\right\} \nonumber \\
&\leq \max_{x\geq 2,y > \frac{d_r}{W(1.27)}}\left\{\frac{(d_r+1)(y+1)^{d_r} x - (1-\frac{1}{d}) y^{d_r+1}}{x^{d_r+1}}\right\}\;.
\end{align*}
Again, we write $y = c \cdot d_r$ and optimize over all $c>\frac{1}{W(1.27)}$ instead. We write
\begin{align}
M &= \max_{x \geq 2 , c> \frac{1}{W(1.27)}} \left\{\frac{(d_r+1)(cd_r+1)^{d_r} x - (1-\frac{1}{d}) (cd_r)^{d_r+1}}{x^{d_r+1}}\right\} \nonumber \\
&\leq \max_{x \geq 2 , c> \frac{1}{W(1.27)}}\left\{\left(\frac{c}{x}d_r\right)^{d_r}\left((d_r+1)\left(1+\frac{1}{cd_r}\right)^{d_r} - \left(1-\frac{1}{d}\right)\frac{cd_r}{x}\right)\right\}\nonumber \\ \label{eq:case4}
&\leq \max_{x \geq 2 , c> \frac{1}{W(1.27)}}\left\{\left(\frac{c}{x}d_r\right)^{d_r}\left((d_r+1)\e^{\frac{1}{c}} - (d_r-1)\frac{c}{x}\right)\right\} \;.
\end{align}
We will proceed by deriving an upper bound on $\frac{c}{x}$ for the maximizing $c$ and $x$. In order to do so, set $z = \frac{c}{x}$ and consider the second part of the product.
\begin{align*}
&(d_r+1)\e^{\frac{1}{c}} - (d_r-1)\frac{c}{x} = (d_r+1)\e^{\frac{1}{xz}} - (d_r-1)z\leq (d_r+1)\e^{\frac{1}{2z}} - (d_r-1)z
\end{align*}
This term is monotonically decreasing in $z$, thus we get an upper bound on $z$, maximizing term \eqref{eq:case4} by setting
\begin{align*}
& (d_r+1)\e^{\frac{1}{2z}} - (d_r-1)z = 0\Leftrightarrow \frac{1}{2z} e^\frac{1}{2z} = \frac{d_r-1}{2(d_r+1)}\Leftrightarrow z = \frac{1}{2 W\left(\frac{d_r-1}{2(d_r+1)}\right)}\;,
\end{align*}
we assume without loss of generality in term \eqref{eq:case4} that $z \coloneqq \frac{c}{x} \leq \frac{1}{2 W\left(\frac{d_r-1}{2(d_r+1)}\right)}$. This leads to
\begin{align*}
M &\leq \max_{z \leq \frac{1}{2W\left(\frac{d_r-1}{2(d_r+1)}\right)}}\left\{\left(zd_r\right)^{d_r}\left((d_r+1)\e^{W(1.27)} - (d_r-1)z\right)\right\}\;.
\end{align*}
Lemma \ref{lem:zLemma} in the appendix shows that not only the maximum of \eqref{eq:case4} but also the maximum of the upper bound is attained at $z = \frac{1}{2W\left(\frac{d_r-1}{2(d_r+1)}\right)}$.
We conclude that we can upper bound the term by
\begin{align*}
&\left(\frac{d_r}{2W\left(\frac{d_r-1}{2(d_r+1)}\right)} \right)^{\!d_r}\left((d_r+1)\e^{W(1.27)} - (d_r-1) \frac{1}{2W\left(\frac{d_r-1}{2(d_r+1)}\right)} \right) \;.
\end{align*}
It remains to show that this is in fact upper bounded by $\left(\Xi_d d\right)^{d+1}$. This can be easily checked for $d=2,3,4,5$ by using $\Xi_d \geq \frac{1}{W\left(\frac{1.27d-1}{d+2}\right)}$ from Lemma \ref{lem:unweighted-interval}. For $d\geq 6$, the right side of the expression is bounded by
\begin{align*}
&\left(d_r\left(e^{W(1.27)} - \frac{1}{2W\left(\frac{d_r-1}{2(d_r+1)}\right)}\right) + e^{W(1.27)} + \frac{1}{2W\left(\frac{d_r-1}{2(d_r+1)}\right)}\right)\\
&\qquad\leq \left(d_r\left(e^{W(1.27)} - \frac{1}{2W\left(\frac{1}{2}\right)} + \frac{1}{6}\left( e^{W(1.27)} + \frac{1}{2W\left(\frac{1}{6}\right)}\right)\right)\right) \leq 1.41d_r\;.
\end{align*}
This gives us
\begin{align*}\left(\frac{d_r}{2W\left(\frac{d_r-1}{2(d_r+1)}\right)} \right)^{d_r}\left(1.41 d_r\right) \leq \left(\Xi_{d_r}d_r\right)^{d_r+1} \leq \left(\Xi_{d}d\right)^{d+1}\;.
\end{align*}
Here, we used that $\Xi_d \geq \frac{1}{W\left(1.27\right)} \approx 1.52$ (Lemma \ref{lem:unweighted-interval}) and that $\Xi_{d_r} d_r$ is increasing in $d_r$, see Lemma~\ref{lem:dXidIncreasing} in the appendix.
\end{proof}

\section{Weighted Resource Allocation Problems}\label{sec:weighted-nash-upper}

In this section, we revisit some upper bound on the competitive ratio of the best reply algorithm for weighted resource allocation problems with polynomial cost functions in $\mathcal{C}_d$. Awerbuch et al.~\cite{AwerbuchAGKKV95} have shown an upper bound of $\left(\Psi_d\right)^{d+1}$, where $\Psi_d$ is the solution to the equation $(d+1)(x+1)^d = x^{d+1}$ for singleton instances where the the cost of each resource is the identity. Bil{\`o} and Vinci~\cite{BiloV17} showed that the worst case for the competitive ratio is obtained for singletons but their proof crucially relies on non-identical cost functions. Bil{\`o} and Vinci also claimed that the $\Psi_d$ is the correct competitive ratio for arbitrary games, but their paper does not contain a proof of this result. For completeness, a proof of Theorem~\ref{thm:bilo} is contained in the appendix. 


\begin{theorem}[Bil{\`o} and Vinci~\cite{BiloV17}]\label{theorem:weighted-nash-upper}
\label{thm:bilo}
For polynomial costs in $\mathcal{C}_d$, the competitive ratio of the best reply algorithm is at most $\Psi_d^{d+1}$ where $\Psi_d$ is the unique solution to the equation $(d+1)(x+1)^d = x^{d+1}$.
\end{theorem}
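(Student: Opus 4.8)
The plan is to run the $(\lambda,\mu)$-smoothness argument, adapted to the online (one-round) nature of the best reply algorithm and with the constants tuned so that $\lambda/(1-\mu)=\Psi_d^{d+1}$. As in the proof of Theorem~\ref{theorem:unweighted-nash-upper}, I would first reduce to monomial cost functions: splitting every resource with cost $\sum_{k\le d}a_kx^k$ into one resource per monomial (each used by exactly the requests using the original one) leaves both the best reply dynamics and $C(\vec S)$ unchanged, since the personal cost is additive over resources and the coefficients $a_k\ge 0$ merely scale the inequalities below. Hence I may assume $c_r(x)=x^{d_r}$ with $d_r\le d$, so that $w_r(\vec S)c_r(w_r(\vec S))=w_r(\vec S)^{d_r+1}$.

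Next I would bound $C(\vec S)$ by a sum of per-resource smoothness terms. Writing $L_r=w_r(\vec S_{<i})$ for the load seen by request~$i$, I telescope $C(\vec S)=\sum_i\bigl(C(\vec S_{\le i})-C(\vec S_{<i})\bigr)$ and bound each marginal increase on a resource by convexity, $(L_r+w_i)^{d_r+1}-L_r^{d_r+1}\le(d_r+1)w_i(L_r+w_i)^{d_r}\le(d+1)w_i(L_r+w_i)^{d_r}$. The \emph{uniform} factor $(d+1)$ is exactly what lets me now invoke the defining inequality~\eqref{eq:private-cost-greedy-constraint} of the algorithm, $\sum_{r\in S_i}w_i(L_r+w_i)^{d_r}\le\sum_{r\in S^*_i}w_i(L_r+w_i)^{d_r}$, which mixes resources of all degrees. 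Bounding the current load by the final load, $L_r\le w_r(\vec S)$, and aggregating per resource via $w_i\le w_r(\vec S^*)$ and $\sum_{i:\,r\in S^*_i}w_i=w_r(\vec S^*)$, I arrive at $C(\vec S)\le(d+1)\sum_r w_r(\vec S^*)\bigl(w_r(\vec S)+w_r(\vec S^*)\bigr)^{d_r}$.

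With $x=w_r(\vec S^*)$ and $y=w_r(\vec S)$ it then suffices to find $\lambda$ and $\mu\in[0,1)$ with
\[(d+1)\,x(x+y)^{d_r}\le\lambda x^{d_r+1}+\mu y^{d_r+1}\qquad(x,y\ge0,\ d_r\le d),\]
since summing over $r$ gives $C(\vec S)\le\lambda\,\OPT+\mu\,C(\vec S)$ and hence competitive ratio $\lambda/(1-\mu)$. I would set $\mu=\frac{d\,\Psi_d}{(d+1)(1+\Psi_d)}$ and $\lambda=\Psi_d^{d+1}(1-\mu)$, so that $\lambda/(1-\mu)=\Psi_d^{d+1}$ holds by construction and $\mu\in(0,1)$. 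For the top degree $d_r=d$, dividing by $x^{d_r+1}$ and setting $s=y/x$ reduces the inequality to $(d+1)(1+s)^d\le\lambda+\mu s^{d+1}$; this very choice of $\mu$ is the one making $s\mapsto(d+1)(1+s)^d-\mu s^{d+1}$ stationary at $s=\Psi_d$, where the defining equation $(d+1)(\Psi_d+1)^d=\Psi_d^{d+1}$ makes its value equal to $\lambda$. Monotonicity of $s\mapsto d(1+s)^{d-1}/s^d$ identifies $s=\Psi_d$ as the unique interior critical point and a global maximum, so the inequality holds for all $s\ge0$.

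The main obstacle, and the step I would treat most carefully, is checking that this single pair $(\lambda,\mu)$ tuned at degree $d$ simultaneously satisfies the inequality for every lower degree $d_r<d$ (necessary because resources of different degrees are combined under common constants), together with the boundary cases $x=0$ and $y=0$. The case $x=0$ is immediate since $\mu\ge0$; the case $y=0$ reduces to $\lambda\ge d+1$, which follows from the closed form $\lambda=(1+\Psi_d)^{d-1}(d+1+\Psi_d)$. I expect the lower-degree checks to go through because the degree-$d$ term dominates by a wide margin, but making this rigorous—e.g.\ by bounding $\max_{s\ge0}\bigl((d+1)(1+s)^{d_r}-\mu s^{d_r+1}\bigr)$ and showing it is largest at $d_r=d$—is the technical heart of the argument. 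Combining everything then yields $C(\vec S)\le\Psi_d^{d+1}\,\OPT$, as claimed.
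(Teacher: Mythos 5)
Your proposal is, modulo a renormalization of the constants, the paper's own argument. The telescoping of $C(\vec S)$ combined with the convexity bound $(L_r+w_i)^{d_r+1}-L_r^{d_r+1}\leq (d_r+1)w_i(L_r+w_i)^{d_r}\leq (d+1)w_i(L_r+w_i)^{d_r}$ is exactly Lemma~\ref{lemma:optimization-weighted-nash} specialized to monomials (there the uniform factor appears as $\beta=d+1$), and the subsequent aggregation to $C(\vec S)\leq(d+1)\sum_r w_r(\vec S^*)\bigl(w_r(\vec S)+w_r(\vec S^*)\bigr)^{d_r}$ matches the paper line by line. Your choice $\mu=\frac{d\,\Psi_d}{(d+1)(1+\Psi_d)}$ equals $(d+1)\,h(\Psi_d)$ for the paper's $h(x)=\frac{d(x+1)^{d-1}}{(d+1)x^d}$, so the tuned pair is the same one; your stationarity computation at $s=\Psi_d$, the closed form $\lambda=(1+\Psi_d)^{d-1}(d+1+\Psi_d)$ (hence $\lambda\geq d+1$, settling $y=0$), and the strict monotonicity of $s\mapsto d(1+s)^{d-1}/s^d$ all check out; the last item is a nice elementary substitute for the paper's appeal to \cite[Lemma~5.2]{AlandDGMS2011} for uniqueness of the interior maximum.

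The one genuine gap is the step you flag yourself but do not carry out: that the single pair $(\lambda,\mu)$ tuned at degree $d$ satisfies the constraint for every $d_r<d$. Be aware that this is not a pointwise domination ``by a wide margin'': for large $s$ both $(d+1)(1+s)^{d}-\mu s^{d+1}$ and $(d+1)(1+s)^{d_r}-\mu s^{d_r+1}$ are negative and the degree-$d$ expression is then the \emph{smaller} of the two, so pointwise comparison fails. The correct statement is restricted: the degree-$d$ expression dominates the degree-$d_r$ expression wherever the latter is non-negative, which suffices because the maximum over $s$ of the lower-degree function is attained where it is non-negative (it is positive at $s=0$). This is precisely \cite[Lemma~5.1]{AlandDGMS2011}, which is how the paper closes this step; Aland et al.\ devote a separate proof to it, and it is the only ingredient your write-up is missing. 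Your stated plan---bounding $\max_{s\geq0}\bigl((d+1)(1+s)^{d_r}-\mu s^{d_r+1}\bigr)$ and showing the maximum over $d_r$ sits at $d_r=d$---is exactly the right formulation and can be completed either by that citation or by reproving the restricted domination; with it in place, your proof is complete and coincides with the paper's.
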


The following theorem provides a closed expression that approximates $\Psi_{d}$ with small error. 

\begin{theorem}
\label{thm:psi}
The equation $(d+1)(x+1)^d = x^{d+1}$ has a unique solution $\Psi_d$ in $\RR_{\geq 0}$ for all $d \in \RR_{\geq 0}$. Moreover, $\Psi_d \in  [d/W(\frac{d}{d+1}) -1,d/W(\frac{d}{d+1})]$, where $W : \RR_{\geq 0} \to \RR_{\geq 0}$ is the Lambert-W function on $\RR_{\geq 0}$.   
\end{theorem}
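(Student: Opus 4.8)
The plan is to prove existence and uniqueness first, then derive the two-sided bound by a monotonicity argument that converts the defining equation into a comparison with the fixed point of an explicit exponential map. For existence and uniqueness, I would study $g(x) = x^{d+1}/(x+1)^d$ on $(0,\infty)$, since the equation $(d+1)(x+1)^d = x^{d+1}$ is equivalent to $g(x) = d+1$. Taking logarithmic derivatives gives
\[
\frac{g'(x)}{g(x)} = \frac{d+1}{x} - \frac{d}{x+1} = \frac{x + d + 1}{x(x+1)} > 0
\qquad\text{for all } x > 0,
\]
so $g$ is strictly increasing with $g(0) = 0$ and $g(x) \to \infty$. Hence $g(x) = d+1$ has a unique positive solution $\Psi_d$, and this argument uses no integrality, so it covers all real $d > 0$ (the degenerate case $d = 0$, where $\Psi_0 = 1$, is handled separately as a limit).

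For the bounds, I would rewrite the equation in the fixed-point form $x = (d+1)\bigl(1 + \tfrac1x\bigr)^d$ and apply the elementary inequalities $(1+\tfrac1x)^x \le \e \le (1+\tfrac1x)^{x+1}$ already used elsewhere in the paper. Raising these to the powers $d/x$ and $d/(x+1)$ respectively sandwiches the relevant factor:
\[
\e^{d/(x+1)} \;\le\; \Bigl(1 + \tfrac1x\Bigr)^d \;\le\; \e^{d/x},
\]
so $\Psi_d$ satisfies both $\Psi_d \le (d+1)\e^{d/\Psi_d}$ and $\Psi_d \ge (d+1)\e^{d/(\Psi_d+1)}$. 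The key observation is that the candidate value $a \coloneqq d/W(\tfrac{d}{d+1})$ is exactly the fixed point of $t \mapsto (d+1)\e^{d/t}$: writing $W = W(\tfrac{d}{d+1})$ and using $W\e^{W} = \tfrac{d}{d+1}$ with $W = d/a$ gives $\tfrac{d}{a}\e^{d/a} = \tfrac{d}{d+1}$, i.e. $a = (d+1)\e^{d/a}$.

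To transfer the sandwich inequalities into bounds on $\Psi_d$, I would introduce $h(t) = t - (d+1)\e^{d/t}$, which is strictly increasing since $h'(t) = 1 + (d+1)\tfrac{d}{t^2}\e^{d/t} > 0$, and which vanishes precisely at $t = a$. The upper-bound inequality gives $h(\Psi_d) \le 0 = h(a)$, so monotonicity yields $\Psi_d \le a = d/W(\tfrac{d}{d+1})$. For the lower bound I would set $y = \Psi_d + 1$; the inequality $\Psi_d \ge (d+1)\e^{d/(\Psi_d+1)}$ rearranges to $y - (d+1)\e^{d/y} \ge 1 > 0 = h(a)$, whence $y > a$ and therefore $\Psi_d > d/W(\tfrac{d}{d+1}) - 1$. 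The only genuinely delicate point is identifying $d/W(\tfrac{d}{d+1})$ as the fixed point of the exponential map and recognizing that the single monotone function $h$ lets both one-sided exponential estimates be read off against the same reference value; the remaining steps are routine calculus inequalities.
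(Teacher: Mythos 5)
Your proof is correct and follows essentially the same route as the paper: your sandwich $e^{d/(x+1)} \le (1+\tfrac{1}{x})^d \le e^{d/x}$ is the exponentiated form of the paper's mean-value-theorem estimate $\tfrac{1}{x+1} \le \log(1+\tfrac{1}{x}) \le \tfrac{1}{x}$, your fixed-point identity $a = (d+1)e^{d/a}$ is exactly the paper's boundary equation $\tfrac{d}{b} = \log\tfrac{b}{d+1}$ inverted through the same Lambert-W manipulation, and your substitution $y = \Psi_d + 1$ mirrors the paper's substitution $b = a' + 1$ for the lower bound. The remaining differences are cosmetic: you establish uniqueness via the log-derivative of $x^{d+1}/(x+1)^d$ rather than by comparing the monotone sides of the log-equation, and your explicit increasing function $h(t) = t - (d+1)e^{d/t}$ makes the passage from the sandwich to the two-sided bound slightly more self-contained than the paper's phrasing, but the underlying argument is the same.
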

\begin{proof}
We first show that the equation $(d+1)(x+1)^d = x^{d+1}$ has a unique solution. Since this solution is not $x=0$ we may assume $x \neq 0$, take logarithms and obtain the equivalent equation
$\log (d+1) + d\log (x+1) = (d+1) \log x$. 
Rearranging terms yields
\begin{align}
\label{eq:psi_log_equation}
\log(x+1) - \log(x)  = \frac{\log x - \log (d+1)}{d}\;.
\end{align}
The left hand side of this equation is decreasing in $x$ and takes values in $(0,\infty)$. The right hand side is increasing in $x$ and for $x \in [d+1,\infty)$, it takes values in $(0,\infty)$. Thus, the equation has a unique solution which we denote by $\Psi_d$.

To get an approximate closed form expression for $\Psi_d$, we use \eqref{eq:psi_log_equation} and the mean value theorem to obtain
$\frac{1}{\xi} = \frac{\log \Psi_d - \log (d+1)}{d}$ for some $\xi \in (\Psi_d,\Psi_d+1)$. 
We obtain
\begin{align*}
\Psi_d \in \Biggl\{ x \in [d+1,\infty) : \frac{1}{x+1} \leq \frac{\log x - \log (d+1)}{d} \leq \frac{1}{x} \Biggr\}.
\end{align*}
As $\log x$ is strictly increasing in $x$ and both $\frac{1}{x+1}$ and $\frac{1}{x}$ are decreasing, we obtain $\Psi_d \in [a,b]$ where $a$ is the unique solution to the equation $\frac{d}{x+1} = \log x - \log(d+1)$ and $b$ is the unique solution to the equation $\frac{d}{x} = \log x - \log(d+1)$.
The latter equation gives
\begin{align*}
\frac{d}{b} &= \log \frac{b}{d+1} & &\Leftrightarrow & e^{d/b} \frac{d}{b} &= \frac{d}{d+1}\;. 
\end{align*}
Using that $W$ is bijective on $\RR_{\geq 0}$ and that $W(xe^x) = x$ for all $x \in \RR_{\geq 0}$, this implies $\frac{d}{b} = W(\frac{d}{d+1})$ and, hence, $b = d /W(\frac{d}{d+1})$. To get a bound on $a$, note that $a \geq a'$ where $a'$ solves $\frac{d}{a'+1} = \log (a'+1) - \log (d+1)$. Substituting $b = a'+1$, we obtain $a' = d/W(\frac{d}{d+1}) - 1$ as before.
\end{proof}

\newpage
\bibliographystyle{plain}
\bibliography{literature}

\newpage
\appendix
\section{Appendix}

\subsection{Unweighted Resource Allocation Problems}

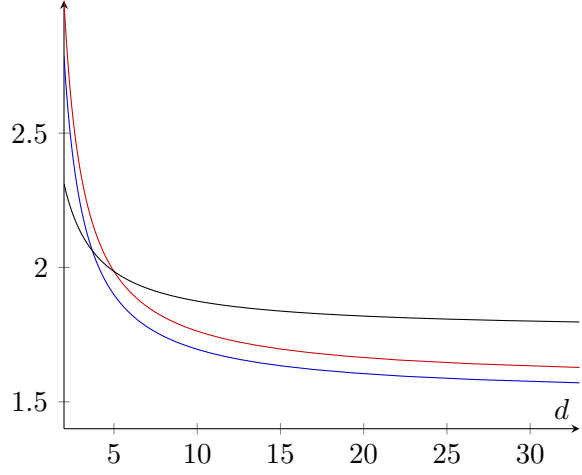
\begin{figure}
\centering
\begin{tikzpicture}
  \begin{axis}[
  	  ymin=1.4,
  	  xlabel=$d$,
      samples=100,
      enlarge y limits=false,
      axis lines=middle,
    ] 
    \addplot [red!80!black, domain=1.62801:2.99295] ({(1+(1/x * exp(1/x)))/(1.2-(1/x * exp(1/x)))}, {x});
        \addplot [blue!80!black, domain=1.57082:2.78838] ((1+(1/x * exp(1/x)))/(1.27-(1/x * exp(1/x))), x);
        \addplot [black, domain=1.79725:2.3118] (((1/x * exp(1/x)))/(1-(1/x * exp(1/x))), x);

  \end{axis}
\end{tikzpicture}
\caption{A plot of our upper and lower bounds on $\Xi_d$ (in red and blue) and the weighted coefficient $\frac{1}{W(\frac{d}{d+1})}$ (in black).}
\label{fig:XidPlot}
\end{figure}

\begin{lemma}\label{lem:XidUnique}
The equation
\[d_r\left((2-\frac{1}{d})x e^{\frac{1}{x}} + x^2 - e^{\frac{2}{x}} - x^2 e^\frac{1}{x}\right) = e^\frac{2}{x}\]
has a unique solution $\Xi_{d_r}$ in $\mathbb{R}_{\geq 0}$ for all $d_r \in \mathbb{R}_{\geq 2}$.
\end{lemma}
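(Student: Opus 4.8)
The plan is to rewrite the defining equation as a single level set of an explicit function of $x$, and then to show that this function is a strictly increasing bijection of $(0,\infty)$ whose range contains the relevant target value. Writing $a:=2-\tfrac1d$ (so $a\geq 1$, since $d\geq 1$), I would divide both sides of the equation by $e^{2/x}>0$ and collect terms, obtaining the equivalent equation $g(x)=1/d_r$, where
\[
g(x) := (a-x)\,x\,e^{-1/x} + x^2 e^{-2/x} - 1.
\]
Because $e^{2/x}>0$ for every $x>0$, this transformation is an equivalence that neither creates nor destroys solutions, so it suffices to prove that $g(x)=1/d_r$ has a unique positive solution for each $d_r\geq 2$, i.e. for each target $1/d_r\in(0,\tfrac12]$.

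Next I would record the two boundary limits. As $x\to 0^+$ both $e^{-1/x}$ and $e^{-2/x}$ decay faster than any power of $x$, so $g(x)\to -1$; as $x\to\infty$ a short Taylor expansion gives $g(x)=(a-1)x+\mathcal{O}(1)\to +\infty$, using $a>1$. Hence, once strict monotonicity is in hand, $g$ is a continuous strictly increasing bijection from $(0,\infty)$ onto $(-1,\infty)$, and since $1/d_r\in(0,\tfrac12]\subset(-1,\infty)$ the equation has exactly one solution, which is the claimed unique $\Xi_{d_r}$.

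The core of the argument is to establish $g'(x)>0$ on $(0,\infty)$. Differentiating and factoring out $e^{-1/x}>0$ reduces this to the positivity of
\[
P(x) := a\Bigl(1+\tfrac1x\Bigr) - (2x+1) + 2e^{-1/x}(x+1).
\]
The key simplification here is the reciprocal substitution $t=1/x$: multiplying $P$ by $t>0$ turns it into $Q(t):=(a-1)t + a t^2 - 2 + 2e^{-t}(1+t)$, and positivity of $P$ on $(0,\infty)$ is equivalent to positivity of $Q$ on $(0,\infty)$. Now $Q(0)=0$, and a direct differentiation collapses to the transparent form $Q'(t) = (a-1) + 2t\bigl(a-e^{-t}\bigr)$, which is strictly positive for all $t>0$ because $a\geq 1$ gives $a-1\geq 0$ and $a-e^{-t}\geq 1-e^{-t}>0$. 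Thus $Q$ increases strictly from $Q(0)=0$, so $Q>0$, hence $P>0$, hence $g'>0$ throughout $(0,\infty)$, completing the uniqueness proof.

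The main obstacle is precisely this monotonicity step, namely controlling the sign of the mixed exponential-polynomial expression $P$; the two limit computations are routine by comparison. What makes the obstacle tractable is the substitution $t=1/x$, after which the awkward factor $e^{-1/x}$ becomes $e^{-t}$ and the derivative $Q'(t)$ reduces to a manifestly nonnegative quantity; without it one is left to argue the sign of $P(x)$ directly, which is far from obvious on the intermediate range of $x$. I note in passing that this argument in fact yields a unique solution for every $d_r>0$, so the hypothesis $d_r\geq 2$ is not actually needed for the conclusion.
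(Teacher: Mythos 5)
Your proof is correct, and it takes a genuinely different route from the paper on the key step. After normalizing, your $g(x)=1/d_r$ is exactly the paper's reformulated equation $x^2e^{-1/x}-x^2+(2-\tfrac1d)x=e^{1/x}\bigl(1+\tfrac{1}{d_r}\bigr)$ multiplied through by $e^{-1/x}$; but where the paper keeps the two sides separate, showing the left side increasing and the right side decreasing, and establishes the increase by a four-case numerical analysis over ranges of $x$ (with bounds like $(2x+1)e^{-1/x}-2x+1.5>0$ checked piecewise), you prove global strict monotonicity of the single combined function via the substitution $t=1/x$, which collapses the derivative to $Q'(t)=(a-1)+2t\bigl(a-e^{-t}\bigr)>0$ — a one-line, case-free argument. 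I verified the computation: $g'=e^{-1/x}P(x)$, $tP(1/t)=Q(t)$, $Q(0)=0$, and $Q'$ is as you state. Your approach also buys something the paper's proof leaves implicit: the paper's monotone-versus-antitone argument gives at most one solution, with existence only recoverable from the later interval bounds of Lemma~\ref{lem:unweighted-interval}, whereas your boundary limits $g(0^+)=-1$ and $g(x)=(a-1)x+\mathcal{O}(1)\to\infty$ settle existence directly, and show moreover that uniqueness holds for all $d_r>0$. One small correction: your opening line ``$a\geq 1$ since $d\geq 1$'' does not suffice for the existence half — the divergence of $g$ needs $a>1$, i.e.\ $d>1$ — but this is harmless since the paper works under $d\geq 2$ throughout (its own proof likewise uses $2-\tfrac1d>1.5$), and your monotonicity step needs only $a\geq 1$.
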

\begin{proof}
We start the proof by reformulating the equation. We observe
\begin{align*}
&\phantom{\Leftrightarrow} d_r\left((2-\frac{1}{d})x e^{\frac{1}{x}} + x^2 - e^{\frac{2}{x}} - x^2 e^\frac{1}{x}\right) = e^\frac{2}{x}\\
&\Leftrightarrow (2-\frac{1}{d})x + x^2 e^{-\frac{1}{x}} - e^{\frac{1}{x}} - x^2 = \frac{1}{d_r} e^\frac{1}{x}\\
&\Leftrightarrow x^2 e^{-\frac{1}{x}} - x^2 +(2-\frac{1}{d})x = e^\frac{1}{x}\left(1+\frac{1}{d_r}\right),
\end{align*}
where we used that $d\neq 0$ and $e^{\frac{1}{x}}\neq 0$. Note that $e^\frac{1}{x}(1+\frac{1}{d_r})$ is monotonically decreasing in $x$. We will finish the proof by showing that $x^2 e^{-\frac{1}{x}} - x^2 +(2-\frac{1}{d})x$ is monotonically increasing in $x$. In order to do so, we consider the derivative
\begin{align*}
&\frac{\partial}{\partial x}\left(x^2 e^{-\frac{1}{x}} - x^2 +(2-\frac{1}{d})x\right)\\
&=(2x+1)e^{-\frac{1}{x}}-2x + 2-\frac{1}{d} > (2x+1)e^{-\frac{1}{x}}-2x + 1.5\;.
\end{align*}
In the following, we will distinguish the cases $x<\frac{3}{4}$, $\frac{3}{4}\leq x \leq 1.18$, $1.18 < x \leq 2$, and $x > 2$ and show that in all cases $(2x+1)e^{-\frac{1}{x}}-2x + 1.5 > 0$.

\paragraph{Case 1: $x<\frac{3}{4}$.}
\begin{align*}
(2x+1)e^{-\frac{1}{x}}-2x + 1.5 > -2x + 1.5 \geq 0\;.
\end{align*}

\paragraph{Case 2: $\frac{3}{4}\leq x \leq 1.18$.}
\begin{align*}
&(2x+1)e^{-\frac{1}{x}}-2x + 1.5 > (2x+1)e^{-\frac{4}{3}}-2x + 1.5 > -1.48x + 1.76 > 0\;.
\end{align*}

\paragraph{Case 3: $1.18 < x \leq 2$.}
\begin{align*}
(2x+1)e^{-\frac{1}{x}}-2x + 1.5 &> (2x+1)e^{-\frac{1}{1.18}}-2x + 1.5\\
&> -1.144x + 2.356 > 0\;.
\end{align*}

\paragraph{Case 4: $2 < x$.}
\begin{align*}
&(2x+1)e^{-\frac{1}{x}}-2x + 1.5 > (2x+1)(1-\frac{1}{x})-2x + 1.5 = (1-\frac{1}{x}) -\frac{1}{2} > 0\;,
\end{align*}
which completes the proof.
\end{proof}

\begin{lemma}\label{lem:unweighted-interval}
Let $\Xi_d$ denote the unique solution to the equation
\[d_r\left((2-\frac{1}{d})x e^{\frac{1}{x}} + x^2 - e^{\frac{2}{x}} - x^2 e^\frac{1}{x}\right) = e^\frac{2}{x}\;.\]
Then,
\[\frac{1}{W\left(\frac{1.27d}{d+1}\right)} \leq \Xi_d \leq \frac{1}{W\left(\frac{1.20d}{d+1}\right)}\;,\]
where $W : \mathbb{R}_{\geq 0} \rightarrow \mathbb{R}_{\geq 0}$ is the Lambert-W function on $\mathbb{R}_{\geq 0}$.
\end{lemma}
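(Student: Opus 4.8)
The plan is to reduce the two‑sided bound on $\Xi_d$ to a pair of scalar inequalities in the single variable $d$, exploiting the monotonicity already established in the proof of Lemma~\ref{lem:XidUnique}. Following the same algebra used there (dividing by $d\,e^{1/x}$ and rearranging), the equation defining $\Xi_d$ is equivalent to $L(x)=R(x)$, where $R(x)=e^{1/x}(1+\tfrac1d)$ is strictly decreasing and $L(x)=x^2e^{-1/x}-x^2+2x$ is strictly increasing on $\RR_{\geq 0}$; the monotonicity of $L$ follows verbatim from the case analysis in Lemma~\ref{lem:XidUnique}, which only gets easier once the linear coefficient equals $2$. Since $D:=L-R$ is strictly increasing and vanishes exactly at $\Xi_d$, I get the clean characterization $\Xi_d\geq a\iff L(a)\leq R(a)$ and $\Xi_d\leq b\iff L(b)\geq R(b)$. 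Thus it suffices to verify one endpoint inequality at $a_d:=1/W(\tfrac{1.27d}{d+1})$ and one at $b_d:=1/W(\tfrac{1.20d}{d+1})$, and the case $d=1$ is covered separately inside Theorem~\ref{theorem:unweighted-nash-upper}, so I may assume $d\geq 2$.

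The whole point of these particular Lambert‑W arguments is that they make the endpoint inequalities collapse. By definition of $W$, the value $a_d$ satisfies $\tfrac1{a_d}e^{1/a_d}=\tfrac{1.27d}{d+1}$, i.e.\ $e^{1/a_d}=\tfrac{1.27d}{d+1}\,a_d$. Substituting this identity gives at once $R(a_d)=e^{1/a_d}(1+\tfrac1d)=1.27\,a_d$ and $a_d^2e^{-1/a_d}=\tfrac{(d+1)a_d}{1.27d}$, so that $L(a_d)\leq R(a_d)$, after dividing through by $a_d>0$, is equivalent to the scalar inequality $a_d\geq\tfrac{d+1}{1.27d}+0.73$. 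The identical computation at $b_d$ reduces $\Xi_d\leq b_d$ to $b_d\leq\tfrac{d+1}{1.20d}+0.80$. I would then eliminate the Lambert‑W function entirely: applying the increasing map $u\mapsto ue^{u}$ turns $a_d\geq P(d)$, with $P(d)=\tfrac{d+1}{1.27d}+0.73$, into the $W$‑free inequality $\tfrac{1.27d}{d+1}\leq P(d)^{-1}e^{1/P(d)}$, and analogously for the upper bound. Each of these is a statement about a single real function of $d$, to be checked on $[2,\infty)$ by combining the limit $d\to\infty$, the endpoint $d=2$, and monotonicity of the slack.

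The main obstacle is the tightness of the lower bound. The slack in $\Xi_d\geq a_d$ decreases monotonically in $d$, and in the limit $d\to\infty$ it shrinks to $1/W(1.27)-(\tfrac1{1.27}+0.73)\approx 1.520-1.517$; in the equivalent $W$‑free form the inequality becomes $1.27\leq 1.2738$, a margin under $0.004$. Consequently the verification cannot afford loose estimates: I must control $W(\tfrac{1.27d}{d+1})$ precisely near its limiting value $W(1.27)\approx 0.658$ and show the slack stays strictly positive for every $d\geq 2$. The constant $1.27$ is calibrated exactly for this, and a marginally larger constant would violate the bound in the limit. By contrast the upper bound with constant $1.20$ carries a comfortable margin (at $d\to\infty$ the inequality reads roughly $1.573\leq 1.633$) and is routine. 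I would therefore spend the bulk of the effort proving that the lower‑bound slack, viewed as a function of $d$, is monotone with a strictly positive infimum, which is precisely what pins down the admissibility of the constant $1.27$.
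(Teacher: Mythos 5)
Your reduction is a genuinely different route from the paper's. The paper never tests endpoint inequalities at the Lambert-W points: it rewrites the defining equation as $\tfrac1x e^{1/x} = \tfrac{d}{d+1}g(x)-\tfrac{1}{d+1}$ with $g(x)=xe^{-1/x}-x+2$, shows $g$ is decreasing with $1\leq g\leq 2$, and then bootstraps: the crude bounds on $g$ yield an interval for $\Xi_d$ via monotonicity of $W$, evaluating $g$ on that interval yields $1.11\leq g\leq 1.33$, and two further rounds tighten this to $1.20<g<1.27$, from which the claimed bounds drop out. Each of the paper's steps needs only endpoint evaluations of the monotone function $g$, and the constants $1.20,1.27$ emerge from the iteration rather than being verified against a tight margin. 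Your monotone-crossing scheme ($D=L-R$ strictly increasing, so each bound reduces to one endpoint inequality, which the identity $e^{1/a_d}=\tfrac{1.27d}{d+1}a_d$ collapses to $a_d\geq\tfrac{d+1}{1.27d}+0.73$, resp.\ $b_d\leq\tfrac{d+1}{1.2d}+0.8$) is correct as far as it goes — I verified the endpoint algebra and the limiting values — and it has the merit of targeting the lemma's literal form $\tfrac{1.27d}{d+1}$ (the paper's own proof actually concludes with $\tfrac{1.27d-1}{d+1}$, an internal mismatch; your version is the one consistent with the coefficient-$2$ equation used in Theorem~\ref{theorem:unweighted-nash-upper} and Lemmas~\ref{lem:Case2}--\ref{lem:dXidIncreasing}), and of making transparent why $1.27$ is nearly the best possible constant.

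However, as written there is a genuine gap exactly where the difficulty is concentrated. The decisive inequality $\tfrac{1.27d}{d+1}\leq \tfrac{1}{P(d)}e^{1/P(d)}$ with $P(d)=\tfrac{d+1}{1.27d}+0.73$ is only checked at $d=2$ and in the limit $d\to\infty$ (slack $\approx 0.0037$); the monotonicity of the slack, on which your argument rests, is asserted but not proven, and it is not automatic: \emph{both} sides are strictly increasing in $d$ (the right side because $P$ is decreasing and $u\mapsto \tfrac1u e^{1/u}$ is decreasing), so monotonicity of their difference needs an actual derivative estimate. The paper's Lemma~\ref{lem:Case2} proves exactly such a statement for its variant $\tfrac{1.27d-1}{d+1}$ by bounding $\tfrac{\partial}{\partial d}$ of the difference from below; something of that kind is required here, and until it is supplied the lower bound is unestablished for intermediate $d$. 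Two smaller points: your upper bound is not uniformly ``comfortable'' — at $d=2$ the reduced inequality reads $b_2=1/W(0.8)\approx 2.0404\leq 2.05$, a margin of about $0.01$, so it too deserves a monotonicity argument rather than a limit check alone; and excluding $d=1$ by deferring to Theorem~\ref{theorem:unweighted-nash-upper} is circular, since the theorem's treatment of $d=1$ itself invokes this lemma via $\Xi_1\geq 1/W(1.27/2)$. Fortunately your endpoint test also passes at $d=1$ ($a_1\approx 2.391\geq P(1)\approx 2.305$), so the fix is simply to include that case.
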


\begin{proof}
First, we reformulate the equation as in Lemma \ref{lem:XidUnique}.
\begin{align*}
& & d_r\left((2-\frac{1}{d})x e^{\frac{1}{x}} + x^2 - e^{\frac{2}{x}} - x^2 e^\frac{1}{x}\right) &= e^\frac{2}{x} \\
&\Leftrightarrow & x^2 e^{-\frac{1}{x}} - x^2 +(2-\frac{1}{d})x &= e^\frac{1}{x}\biggl(1+\frac{1}{d_r}\biggr).
\end{align*}
We proceed by showing upper and lower bounds on $\Xi_d$. Note that by using $x\neq 0$, the equation above is equivalent to
\[\frac{1}{x}e^{\frac{1}{x}} = \frac{d}{d+1}\left(xe^{-\frac{1}{x}}-x+2-\frac{1}{d}\right) = \frac{d}{d+1}\left(xe^{-\frac{1}{x}}-x+2\right) -\frac{1}{d+1}.\]
In order to simplify notation, let us define $g(x) = xe^{-\frac{1}{x}}-x+2$. We will proceed to derive bounds on $\frac{1}{x}e^{\frac{1}{x}}$ by showing upper and lower bounds on $g$. First, we argue that $g$ is monotonically decreasing. Note that $\frac{\partial}{\partial x}g(x) = e^{-\frac{1}{x}}\left(1+\frac{1}{x}\right)-1 = e^{-\frac{1}{x}}\left(1+\frac{1}{x}-e^\frac{1}{x}\right)$ is negative. Since
\[\lim_{x \rightarrow \infty}g(x) = 2+ \lim_{x \rightarrow \infty}{\frac{e^{-\frac{1}{x}}-1}{\frac{1}{x}}} = 2+ \lim_{x \rightarrow \infty}{\frac{\frac{1}{x^2}e^{-\frac{1}{x}}}{-\frac{1}{x^2}}} = 2+ \lim_{x \rightarrow \infty}{-e^{-\frac{1}{x}}} = 1,\]
we get that $1\leq g(x)\leq 2$ for all $x \in \mathbb{R}_{\geq 0}$. This property of $g(x)$ leads to the inequality
\[\frac{d}{d+1} - \frac{1}{d+1} \leq \frac{1}{x} e^\frac{1}{x} \leq \frac{2d}{d+1}  - \frac{1}{d+1}.\] By using the monotonicity of the Lambert-W function, we obtain
\begin{align*}
&\phantom{\Leftrightarrow} W\left(\frac{d-1}{d+1}\right) \leq \frac{1}{x} \leq W\left(\frac{2d-1}{d+1}\right)\\
&\Leftrightarrow \frac{1}{W\left(\frac{2-1}{2+1}\right)} \geq \frac{1}{W\left(\frac{d-1}{d+1}\right)} \geq x \geq \frac{1}{W\left(\frac{2d-1}{d+1}\right)} \geq \frac{1}{W\left(2\right)}.
\end{align*}
Using this upper and lower bound on $x$ for $g(x)$ again, we conclude
$1.11 \leq g(x) \leq 1.33$ for all $x \in \left[\frac{1}{W\left(2\right)}, \frac{1}{W\left(\frac{1}{3}\right)}\right]$, i.e.\ we get the following improved bounds on $\frac{1}{x}e^{\frac{1}{x}}$ and $x$. 
\begin{align*}
&\phantom{\Leftrightarrow} \frac{1.11d-1}{d+1} \leq \frac{1}{x} e^\frac{1}{x} \leq \frac{1.33d-1}{d+1}\\
&\Leftrightarrow W\left(\frac{1.11d-1}{d+1}\right) \leq \frac{1}{x} \leq W\left(\frac{1.33d-1}{d+1}\right)\\
&\Leftrightarrow 2.16 \geq \frac{1}{W\left(\frac{1.11 \cdot 2-1}{2+1}\right)} \geq \frac{1}{W\left(\frac{1.11 d-1}{d+1}\right)}\\
&\phantom{\Leftrightarrow 2.16\ } \geq x \geq \frac{1}{W\left(\frac{1.33 d-1}{d+1}\right)} \geq \frac{1}{W\left(1.33 \right)} \geq 1.47.
\end{align*}
Applying these new bounds on $g(x)$ yields $1.19 < g(x) < 1.28$, i.e.\ we can again conclude
\begin{align*}
W\left(\frac{1.19d-1}{d+1}\right)\leq \frac{1}{x} \leq W\left(\frac{1.28d-1}{d+1}\right),
\end{align*}
which leads to $1.51 < x < 2.05$ and $1.20 < g(x) < 1.27$ for all relevant candidates for $x$. Using this, we get the desired result of
\[\frac{1}{W\left(\frac{1.27d-1}{d+1}\right)} \leq x \leq \frac{1}{W\left(\frac{1.20d-1}{d+1}\right)},\]
which finishes the proof.
\end{proof}

\begin{lemma}
\label{lem:unweightedCaseOne}
For all $d \geq 1$ and all $y\geq 1$, the following inequality holds
\[\frac{y^{d+2} - (y-1)^{d+2} + d + 1}{d+2} \leq y^{d+1}\;.\]
\end{lemma}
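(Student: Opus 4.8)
### Proof Proposal

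\textbf{The goal.} I need to show that $\frac{y^{d+2} - (y-1)^{d+2} + d + 1}{d+2} \leq y^{d+1}$ for all integers $d \geq 1$ and all $y \geq 1$. Equivalently, after multiplying through by $d+2$, the claim is
\[
y^{d+2} - (y-1)^{d+2} + d + 1 \leq (d+2)\, y^{d+1}.
\]
The plan is to expand the difference of $(d+2)$-th powers and compare it term-by-term against $(d+2) y^{d+1}$, isolating the additive slack $d+1$ at the end.

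\textbf{Main approach via the binomial expansion.} First I would expand $(y-1)^{d+2} = \sum_{k=0}^{d+2} \binom{d+2}{k} y^k (-1)^{d+2-k}$, so that
\[
y^{d+2} - (y-1)^{d+2} = -\sum_{k=0}^{d+1} \binom{d+2}{k} y^k (-1)^{d+2-k}.
\]
The leading term of this difference is the $k=d+1$ term, namely $\binom{d+2}{d+1} y^{d+1} = (d+2) y^{d+1}$, which exactly matches the right-hand side $(d+2) y^{d+1}$ of the inequality. So after cancelling this dominant term, the inequality reduces to showing that the remaining lower-order terms plus the constant $d+1$ are nonpositive; concretely it suffices to prove
\[
\sum_{k=0}^{d} \binom{d+2}{k} y^k (-1)^{d+1-k} \;\geq\; d+1,
\]
where I have rearranged signs so the alternating sum on the left must dominate the constant $d+1$. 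The natural route is to interpret $\sum_{k=0}^{d} \binom{d+2}{k} y^k(-1)^{d+1-k}$ as $(y-1)^{d+2}$ minus its two top-degree terms and re-examine monotonicity in $y$.

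\textbf{A cleaner route via a telescoping/integral bound.} Rather than wrestle with the alternating sum directly, I would prefer to use the mean value theorem or an integral comparison on the function $t \mapsto t^{d+2}$. Since $y^{d+2} - (y-1)^{d+2} = \int_{y-1}^{y} (d+2) t^{d+1}\, dt$ and $t^{d+1}$ is increasing, this integral is at most $(d+2) y^{d+1}$; that immediately handles the power-difference part but leaves the stray $+(d+1)$ on the wrong side, so the crude integral bound is not quite enough. To recover the extra slack I would instead bound the integral by $(d+2) y^{d+1}$ minus a positive gap: because $t^{d+1}$ is convex, $\int_{y-1}^{y}(d+2)t^{d+1}\,dt \leq (d+2)y^{d+1} - \tfrac{d+2}{2}\big(y^{d+1}-(y-1)^{d+1}\big)$ by the trapezoidal comparison, and the gap $\tfrac{d+2}{2}(y^{d+1}-(y-1)^{d+1}) \geq \tfrac{d+2}{2}\cdot (d+1)(y-1)^{d}$ must be shown to dominate $d+1$. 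For $y \geq 2$ this is clear since $(y-1)^d \geq 1$; the only delicate case is $y = 1$.

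\textbf{The boundary case and the main obstacle.} I expect the case $y=1$ to be the crux. At $y=1$ the left-hand side of the original inequality becomes $\frac{1 - 0 + d+1}{d+2} = \frac{d+2}{d+2} = 1$, which equals $y^{d+1} = 1$ exactly — so the inequality holds with \emph{equality} at $y=1$. This tightness is precisely why the crude integral estimate fails and why the constant $+(d+1)$ is calibrated exactly right; any correct argument must be sharp at this boundary. The main obstacle is therefore to organize the argument so that it is tight at $y=1$ and has room to spare for $y \geq 2$. The cleanest way to exploit this is to define $f(y) = (d+2)y^{d+1} - y^{d+2} + (y-1)^{d+2} - (d+1)$, verify $f(1) = 0$ by direct substitution, and then show $f$ is nondecreasing on $[1,\infty)$ by checking $f'(y) = (d+1)(d+2)y^{d} - (d+2)y^{d+1} + (d+2)(y-1)^{d+1} \geq 0$, i.e. $(d+1)y^d - y^{d+1} + (y-1)^{d+1} \geq 0$. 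Since $y^{d+1} - (y-1)^{d+1} \leq (d+1)y^d$ again by the mean value theorem applied to $t^{d+1}$, the derivative is nonnegative, and combined with $f(1)=0$ this yields $f(y) \geq 0$ for all $y \geq 1$, completing the proof.
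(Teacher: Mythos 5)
Your final argument is correct and complete, and it takes a genuinely different route from the paper. The paper proves the lemma by induction on $d$: after an explicit check of the base case $d=1$, it verifies equality at $y=1$ for degree $d'+1$ and then shows that the difference $f_{d'+1}(y) = \frac{y^{d'+3}-(y-1)^{d'+3}+d'+2}{d'+3} - y^{d'+2}$ is nonincreasing on $[1,\infty)$, bounding its derivative by invoking the induction hypothesis for degree $d'$. You use the same boundary-plus-monotonicity skeleton (equality at $y=1$, then a sign condition on the derivative of $f(y) = (d+2)y^{d+1} - y^{d+2} + (y-1)^{d+2} - (d+1)$), but you observe that the derivative inequality one actually needs, $y^{d+1}-(y-1)^{d+1} \leq (d+1)y^d$, is simply the mean value theorem applied to $t\mapsto t^{d+1}$ with $\xi\in(y-1,y)\subseteq[0,y]$ — the additive slack $d+1$ that forces the paper into an induction never survives differentiation. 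This eliminates the induction entirely, yields a shorter and arguably cleaner proof, and works verbatim for all real $y\geq 1$ (the paper needs only integer $y$, so both suffice); what the paper's inductive version buys in exchange is essentially nothing here, since its inductive hypothesis is strictly stronger than the crude bound its derivative estimate requires. Two small remarks: the binomial-expansion and trapezoidal detours in your write-up are dispensable and not fully carried out (the trapezoid route leaves real $y\in(1,2)$ unhandled, though it would cover integer $y\geq 2$), so the last paragraph alone constitutes your proof; and your identification of $y=1$ as the point of exact equality calibrating the constant $d+1$ is precisely the tightness around which the paper's induction is also built.
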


\begin{proof}
We prove the lemma by induction. Let $d = 1$, then we have
\begin{align*}
\frac{y^{d+2} - (y-1)^{d+2} + d + 1}{d+2} &= \frac{y^3 - y^3 + 3y^2 - 3y + 1 + 2}{3}\\
&= y^2 - y + 1 \leq y^2\;,
\end{align*}
which holds true for all $y \geq 1$.

Next, assume the inequality holds for all $d = \{1, \dots, d'\}$, then for $d = d'+1$ and $y=1$ we have
\[\frac{y^{d'+3} - (y-1)^{d'+3} + d'+2}{d'+3} = \frac{1 +d'+2}{d'+3} = 1 \leq y^{d'+2}\;.\]
Furthermore, the difference of the left- and right-hand side of the inequality $f_{d'+1}(y) = \frac{y^{d'+3} - (y-1)^{d'+3} + d'+2}{d'+3} - y^{d'+2}$ is continuous and the derivative is
\begin{align*}f'_{d'+1}(y) &= \frac{(d'+3)y^{d'+2} - (d'+3)(y-1)^{d'+2}}{d'+3}+(d'+2)y^{d'+1}\\
&\leq y^{d'+2} - (y-1)^{d'+2} + d'+1 + (d'+2)y^{d'+1}\\
&= (d'+2)\left(\frac{y^{d'+2}-(y-1)^{d'+2}+d'+1}{d'+2} + y^{d'+1}\right) \leq 0\;.
\end{align*}
Where the last inequality holds by induction.
\end{proof}

\begin{lemma}
For all $d\geq 2$ it holds that
\[\frac{1}{W(1.27)}+\frac{1}{d} \leq \Xi_d,\]
where $\Xi_d$ is the solution to the equation $d(2xe^{\frac{1}{x}} + x^2 - e^\frac{2}{x} - x^2 e^\frac{1}{x}) = e^\frac{2}{x}$.  
\label{lem:Case2}
\end{lemma}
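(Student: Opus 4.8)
The plan is to use the monotone reformulation of the defining equation that already underlies Lemmas~\ref{lem:XidUnique} and~\ref{lem:unweighted-interval}. Dividing the equation $d(2xe^{1/x}+x^2-e^{2/x}-x^2e^{1/x})=e^{2/x}$ by $d\,e^{1/x}$ and rearranging shows that $\Xi_d$ is the unique root of $P(x)=Q(x)$, where $P(x)=\tfrac{d+1}{d}e^{1/x}$ and $Q(x)=x^2e^{-1/x}-x^2+2x$. Now $P$ is strictly decreasing, while $Q$ is strictly increasing (its derivative $(2x+1)e^{-1/x}-2x+2$ is positive, exactly as in the case analysis of Lemma~\ref{lem:XidUnique}). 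Hence the two graphs cross only at $\Xi_d$, with $P>Q$ on $(0,\Xi_d)$ and $P<Q$ afterwards. Therefore, to establish $\Xi_d\ge L$ for a candidate point $L$ it suffices to verify the single inequality $P(L)\ge Q(L)$. I would take $L=a+\tfrac1d$ with $a:=\tfrac{1}{W(1.27)}$, so that the claim $\tfrac{1}{W(1.27)}+\tfrac1d\le\Xi_d$ follows once $P(a+\tfrac1d)\ge Q(a+\tfrac1d)$ is shown for every $d\ge 2$.

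Writing $u=a+\tfrac1d$ and using $1+\tfrac1d=1-a+u$, the inequality $P(L)\ge Q(L)$ becomes $\Phi(u)\ge 0$ on $u\in[a,a+\tfrac12]$, where $\Phi(u)=(1-a+u)e^{1/u}-u^2e^{-1/u}+u^2-2u$. The decisive case is the left endpoint $u=a$ (the large-$d$ limit), which I would treat first. Since $a=\tfrac{1}{W(1.27)}$ satisfies $\tfrac1a e^{1/a}=1.27$, we have $e^{1/a}=1.27\,a$ and $a\,e^{-1/a}=\tfrac{1}{1.27}$; substituting these collapses $\Phi(a)$ to the clean form $\Phi(a)=e^{1/a}-a\bigl(ae^{-1/a}-a+2\bigr)=a\bigl(a-a_0\bigr)$ with $a_0:=2+\tfrac{1}{1.27}-1.27\approx1.5174$. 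Thus $\Phi(a)\ge0$ is equivalent to $a\ge a_0$, i.e. to $W(1.27)\le\tfrac{1}{a_0}$, which by monotonicity of $t\mapsto te^t$ reduces to the explicit check $\tfrac{1}{a_0}e^{1/a_0}\ge1.27$; numerically $\tfrac{1}{a_0}e^{1/a_0}\approx1.274$, so the inequality holds.

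For the remaining values I would show that $\Phi$ is nondecreasing on $[a,a+\tfrac13]$, whence $\Phi(a+\tfrac1d)\ge\Phi(a)\ge0$ for all $d\ge3$, and verify the single case $d=2$ (that is $u=a+\tfrac12$) directly, where $\Phi$ is comfortably positive. Monotonicity on $[a,a+\tfrac13]$ follows by bounding $\Phi'(u)=e^{1/u}\bigl(1-\tfrac{1-a+u}{u^2}\bigr)-(2u+1)e^{-1/u}+2u-2$ below by a positive constant; crucially this interval stays away from $u\approx2$, where $\Phi'$ comes closest to zero, so a positive lower bound on the order of $10^{-2}$ is available (if needed after subdividing $[a,a+\tfrac13]$ into a few pieces).

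The main obstacle is the tightness of the boundary case $\Phi(a)=a(a-a_0)\ge0$. It holds only because $a$ exceeds $a_0$ by about $2.7\cdot10^{-3}$, equivalently because $\tfrac{1}{a_0}e^{1/a_0}\approx1.274$ exceeds $1.27$ by about $3.8\cdot10^{-3}$. This margin is small, so the estimate of $W(1.27)$ — equivalently of $\tfrac{1}{a_0}e^{1/a_0}$ — must be carried to a few reliable significant digits so that rounding cannot reverse the sign. By comparison, the monotonicity of $\Phi$ near $a$ and the direct check for $d=2$ carry far more slack and are routine.
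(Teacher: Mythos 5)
Your proof is correct, but it takes a genuinely different route from the paper's. The paper never returns to the defining equation: it invokes the lower bound $\Xi_d \geq 1/W\bigl(\tfrac{1.27d-1}{d+1}\bigr)$ already established in Lemma~\ref{lem:unweighted-interval}, reduces the claim via monotonicity of $t\mapsto te^t$ to showing $g(d) := \tfrac{1.27d-1}{d+1} - \tfrac{d}{cd+1}e^{\frac{d}{cd+1}} \leq 0$ with $c = 1/W(1.27)$, and proves this by combining $\frac{\partial}{\partial d}g(d) > \tfrac{0.88}{(d+1)^2} > 0$ with $\lim_{d\to\infty} g(d) = 0$; since $W(1.27)e^{W(1.27)} = 1.27$ exactly, the boundary case there is an identity and every numerical estimate in the paper's argument has comfortable slack. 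You instead attack the defining equation directly through the crossing structure ($P$ decreasing, $Q$ increasing, which indeed follows from the derivative positivity established in Lemma~\ref{lem:XidUnique}, where even the stronger bound with $+1.5$ in place of $+2$ is shown), reducing everything to the pointwise check $\Phi(u) \geq 0$ on $(a, a+\tfrac12]$. Your computations verify: the collapse $\Phi(a) = a(a-a_0)$ with $a_0 = 2 + \tfrac{1}{1.27} - 1.27$ is exactly right, the decisive constant check gives $\tfrac{1}{a_0}e^{1/a_0} \approx 1.2738 > 1.27$, the derivative $\Phi'$ stays roughly in $[0.012, 0.043]$ on $[a, a+\tfrac13]$ (dipping to about $0.006$ only near $u \approx 2$, outside that interval, as you note), and $\Phi(a+\tfrac12) \approx 0.014 > 0$ handles $d=2$. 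What your route buys is self-containedness — you need only the uniqueness/monotonicity structure, not the bootstrapped two-sided interval of Lemma~\ref{lem:unweighted-interval} — plus a strict inequality even in the limit $d\to\infty$. What it costs is that your decisive case rests on a margin of about $2.7\cdot 10^{-3}$ in $a - a_0$ (about $3.8\cdot 10^{-3}$ after applying $te^t$), so the value of $e^{1/a_0}$ must be certified to several reliable digits, whereas the paper's proof never has to win a numerically tight comparison at the boundary; and to be complete, the positive lower bound on $\Phi'$ over $[a, a+\tfrac13]$ still needs to be written out, though as you say a short subdivision with crude bounds on $e^{\pm 1/u}$ suffices.
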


\begin{proof}
We start the proof by denoting $c = \frac{1}{W(1.27)} > 1.52$, and by using $\Xi_d \geq \frac{1}{W\left(\frac{1.27d-1}{d+1}\right)}$ which we show in Lemma \ref{lem:unweighted-interval}. Thus it suffices to show
\begin{align*}
& c+\frac{1}{d} \leq \frac{1}{W\left(\frac{1.27d-1}{d+1}\right)}\\
&\Leftrightarrow \frac{d}{cd+1} \geq W\left(\frac{1.27d-1}{d+1}\right)\\
&\Leftrightarrow \frac{d}{cd+1} e^{\frac{d}{cd+1}} \geq \frac{1.27d-1}{d+1}\\
&\Leftrightarrow \frac{1.27d-1}{d+1} - \frac{d}{cd+1} e^{\frac{d}{cd+1}} \leq 0.
\end{align*}
For $g(d)\coloneqq \frac{1.27d-1}{d+1} - \frac{d}{cd+1} e^{\frac{d}{cd+1}}$, we will prove that $g(d)$ is monotonically increasing in $d$. Together with the observation $\lim_{d \rightarrow \infty}{g(d)} = 0$, this completes the proof. In order to show that $g(d)$ is increasing, we consider
\begin{align*}
\frac{\partial}{\partial d}g(d) &= \frac{2.27}{(d+1)^2} - \frac{cd+d+1}{(cd+1)^3}e^{\frac{d}{cd+1}}\\
&\geq \frac{1}{(d+1)^2}\left(2.27 - \frac{cd+d+1}{c^2(cd+1)}e^\frac{1}{c}\right)\\
&\geq \frac{1}{(d+1)^2}\left(2.27-\left(1 + \frac{d}{cd+1}\right)\frac{1}{c^2}e^\frac{1}{c}\right)\\
&\geq \frac{1}{(d+1)^2}\left(2.27-\left(1 + \frac{1}{c}\right)\frac{1}{c^2}e^\frac{1}{c}\right) > \frac{0.88}{(d+1)^2}\;,
\end{align*}
which finishes the proof.
\end{proof}

\begin{lemma}
\label{lem:Case3TermSmallerTwo}
The inequality
\[(d_r+2)e^{-\frac{1}{c}} - d_r e^{-\frac{d_r+2}{cd_r-1}} < 2\]
holds for all $d_r \geq 2, c \geq 1$.
\end{lemma}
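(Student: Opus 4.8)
The goal is to prove that $(d_r+2)e^{-1/c} - d_r e^{-(d_r+2)/(cd_r-1)} < 2$ for all $d_r \geq 2$ and $c \geq 1$.

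\medskip

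\textbf{Plan of attack.} The plan is to bound each of the two terms separately and show their combination stays below $2$. The first term $(d_r+2)e^{-1/c}$ is the dangerous one: since $e^{-1/c} < 1$, it is at most $d_r + 2$, which is far too large for $d_r$ big, so we cannot simply drop the negative second term. The key observation I would exploit is that the two exponents $-1/c$ and $-(d_r+2)/(cd_r-1)$ are close to each other when $c$ is moderate, so the subtraction produces real cancellation. First I would verify the elementary inequality $\frac{1}{c} \leq \frac{d_r+2}{cd_r - 1}$ for $c \geq 1, d_r \geq 2$; this is equivalent to $cd_r - 1 \leq c(d_r+2) = cd_r + 2c$, i.e.\ $-1 \leq 2c$, which holds trivially. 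Consequently $e^{-1/c} \geq e^{-(d_r+2)/(cd_r-1)}$, and the left-hand side is bounded above by
\[
(d_r+2)e^{-1/c} - d_r e^{-(d_r+2)/(cd_r-1)} \leq (d_r+2)e^{-1/c} - d_r e^{-1/c} = 2 e^{-1/c} \leq 2,
\]
but this only gives $\leq 2$, not the strict $< 2$; strictness follows since $e^{-1/c} < 1$ for all finite $c$, giving $2e^{-1/c} < 2$.

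\medskip

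\textbf{Refining the bound.} The step above is almost the whole proof, and it is cleaner than I first expected: the cancellation is immediate once one notes the ordering of the exponents. The only subtlety is making sure the replacement of the smaller exponent is in the correct direction. Since we subtract $d_r e^{-(d_r+2)/(cd_r-1)}$, replacing this subtracted quantity by the \emph{smaller} value $d_r e^{-1/c}$ (which is legitimate because $e^{-(d_r+2)/(cd_r-1)} \leq e^{-1/c}$) can only \emph{increase} the left-hand side, so the resulting upper bound $2e^{-1/c}$ is valid. I would state this direction carefully to avoid a sign error, since that is the one place a careless reader might object.

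\medskip

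\textbf{Expected main obstacle.} I anticipate essentially no serious obstacle here; the entire difficulty is packaged into the single exponent comparison $\frac{1}{c} \leq \frac{d_r+2}{cd_r-1}$, which reduces to a linear inequality. The only thing to double-check is that the denominator $cd_r - 1$ is positive over the stated range (it is, since $c \geq 1$ and $d_r \geq 2$ give $cd_r - 1 \geq 1 > 0$), so that the exponent and the monotonicity of $t \mapsto e^{-t}$ may be applied safely. With that in hand, the chain $(d_r+2)e^{-1/c} - d_r e^{-(d_r+2)/(cd_r-1)} \leq 2e^{-1/c} < 2$ completes the argument.
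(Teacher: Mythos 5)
Your exponent comparison is correct ($\tfrac{1}{c} < \tfrac{d_r+2}{cd_r-1}$ for $c \geq 1$, $d_r \geq 2$, hence $e^{-(d_r+2)/(cd_r-1)} < e^{-1/c}$), but the replacement step has its sign backwards, and this breaks the entire argument. The quantity you subtract, $d_r e^{-(d_r+2)/(cd_r-1)}$, is by your own comparison the \emph{smaller} of the two; replacing it by $d_r e^{-1/c}$ means subtracting \emph{more}, which can only \emph{decrease} the left-hand side. So your chain yields the lower bound
\[
(d_r+2)e^{-\frac{1}{c}} - d_r e^{-\frac{d_r+2}{cd_r-1}} \;\geq\; (d_r+2)e^{-\frac{1}{c}} - d_r e^{-\frac{1}{c}} \;=\; 2e^{-\frac{1}{c}}\;,
\]
not the upper bound you claim. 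Concretely, for $d_r = 10$, $c = 2$ the left-hand side equals $12e^{-1/2} - 10e^{-12/19} \approx 1.96$, while $2e^{-1/c} = 2e^{-1/2} \approx 1.21$, so the intermediate inequality $(d_r+2)e^{-1/c} - d_r e^{-(d_r+2)/(cd_r-1)} \leq 2e^{-1/c}$ is simply false. The irony is that you flagged exactly this step as the place where a sign error could hide.

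The failure is not cosmetic, because the lemma is tight: writing the left-hand side as $2e^{-1/c} + d_r\left(e^{-1/c} - e^{-(d_r+2)/(cd_r-1)}\right)$ shows it strictly exceeds $2e^{-1/c}$ and in fact tends to $(d_r+2) - d_r = 2$ from below as $c \to \infty$, so no bound that collapses the two exponentials into one can work; one must control the gap $e^{-1/c} - e^{-(d_r+2)/(cd_r-1)}$ well enough to absorb the factor $d_r$. This is what the paper's proof does: it sets $g(c) = (d_r+2)e^{-1/c} - d_r e^{-(d_r+2)/(cd_r-1)}$, observes $\lim_{c\to\infty} g(c) = 2$, and shows $g'(c) > 0$ on the stated range using the estimate $e^x < \frac{1}{1-x}$ for $x<1$, whence $g(c) < 2$ for every finite $c$. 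Repairing your attempt would essentially force you to redo that monotonicity computation, so the proposal as written has a genuine, unfixable-in-place gap rather than a presentational slip.
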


\begin{proof}
Fix any $d_r \in \mathbb{R}_+$ and define $g: \mathbb{R} \rightarrow \mathbb{R}$ with
\[c \mapsto (d_r+2)e^{-\frac{1}{c}} - d_r e^{-\frac{d_r+2}{cd_r-1}}.\] We will show the statement by showing that
$\lim_{c \rightarrow \infty} g(c) = 2$ and $g(c)$ is increasing in $c$. Observe that
\begin{align*}
\lim_{c \rightarrow \infty} (d_r+2)e^{-\frac{1}{c}} - d_r e^{-\frac{d_r+2}{cd_r-1}} = (d_r+2) - d_r = 2.
\end{align*}
In order to show that $g(c)$ is monotonically increasing in $c$, recall that $e^x < \frac{1}{1-x}$ for all $x<1$, thus we calculate
\begin{align*}
\frac{\partial}{\partial c}\left(g(c)\right) &= \frac{d_r+2}{c^2}e^{-\frac{1}{c}} - \frac{(d_r+2)d_r^2}{(cd_r-1)^2}e^{-\frac{d_r+2}{cd_r -1}}\\
&= \frac{d_r+2}{c^2}e^{-\frac{1}{c}}\left(1-\frac{c^2d_r^2}{(cd_r-1)^2} e^{-\frac{d_r+2}{cd_r -1} + \frac{1}{c}}\right)\\
&= \frac{d_r+2}{c^2}e^{-\frac{1}{c}}\left(1-\frac{c^2d_r^2}{(cd_r-1)^2}e^{\frac{-2c-1}{(cd_r -1)c} }\right)\\
&> \frac{d_r+2}{c^2}e^{-\frac{1}{c}}\left(1-\frac{c^2d_r^2}{(cd_r-1)^2}\frac{1}{\frac{(cd_r-1)c+2c+1}{(cd_r -1)c} }\right)\\
&=\frac{d_r+2}{c^2}e^{-\frac{1}{c}}\left(1-\frac{c^3d_r^2}{(cd_r-1)(c^2d_r-c+2c+1)}\right)\\
&=\frac{d_r+2}{c^2}e^{-\frac{1}{c}}\left(1-\frac{c^3d_r^2}{c^3d_r^2 +cd_r-c-1}\right) > 0 ,
\end{align*}
since $\frac{d_r+2}{c^2}e^{-\frac{1}{c}} > 0$, $d_r \geq 2$, and $c \geq 1$.
\end{proof}

\begin{lemma}
\label{lem:Case3TermDecreasing}
The term \[(d_r + 1)e^\frac{1}{c} - \biggl(2-\frac{1}{d}\biggr)c d_r +c^2d_r(1-e^{-\frac{1}{c}})\]
is monotonically decreasing in $c$.
\end{lemma}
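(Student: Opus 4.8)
The plan is to show that the derivative of the expression with respect to $c$ is non-positive on the relevant range $c \ge 1$ (with $1 \le d_r \le d$ and $d \ge 2$, as in the surrounding proof). Writing $f(c) = (d_r+1)e^{1/c} - (2-\frac1d)cd_r + c^2 d_r(1 - e^{-1/c})$ and using $\frac{d}{dc}e^{\pm 1/c} = \mp \frac{1}{c^2}e^{\pm 1/c}$ together with $\frac{d}{dc}(c^2 e^{-1/c}) = (2c+1)e^{-1/c}$, I would compute
\[
f'(c) = -\frac{d_r+1}{c^2}e^{1/c} - \Bigl(2-\tfrac1d\Bigr)d_r + 2cd_r - (2c+1)d_r e^{-1/c}.
\]
Hence $f'(c) \le 0$ is equivalent to
\[
\frac{d_r+1}{c^2}e^{1/c} + (2c+1)d_r e^{-1/c} + \Bigl(2-\tfrac1d\Bigr)d_r \ge 2cd_r,
\]
which is the inequality I then set out to establish.

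Next I would lower-bound the awkward exponential term via the elementary inequality $e^{-1/c} \ge 1 - \frac1c$, giving $(2c+1)d_r e^{-1/c} \ge d_r(2c - 1 - \frac1c)$. Substituting this and cancelling the two $\pm 2cd_r$ contributions, the claim reduces to the much cleaner
\[
\frac{d_r+1}{c^2}e^{1/c} + d_r\Bigl(\tfrac{d-1}{d} - \tfrac1c\Bigr) \ge 0.
\]
I would then split on the sign of the bracket. If $c \ge \frac{d}{d-1}$ then $\frac{d-1}{d} - \frac1c \ge 0$, both summands are non-negative, and the inequality is immediate. The remaining case is $1 \le c < \frac{d}{d-1}$, where the second summand is negative and must be dominated by the first.

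The main obstacle is precisely this last regime, since a naive bound such as $\frac{d_r+1}{c^2}e^{1/c} \ge \frac{d_r+1}{c^2}$ is too weak once $c$ approaches $2$, as the $\frac1{c^2}$ factor kills the positive term. The device I would use is to multiply the reduced inequality by $c^2$ \emph{before} bounding, turning it into $(d_r+1)e^{1/c} \ge d_r\bigl(c - c^2\tfrac{d-1}{d}\bigr)$. On $[1,\frac{d}{d-1})$ the right-hand side is a downward parabola in $c$ whose vertex $\frac{d}{2(d-1)} \le 1$ for $d \ge 2$, so it is maximised at $c=1$ and is at most $d_r\cdot\frac1d \le 1$ using $d_r \le d$; meanwhile $c < \frac{d}{d-1} \le 2$ forces $\frac1c > \frac12$, so the left-hand side is at least $(d_r+1)e^{1/2} \ge 2e^{1/2} > 1$, closing the case. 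The only delicate point is checking that the coupling of the $c$-range to $d$ through $c < \frac{d}{d-1}$ interacts correctly with $d_r \le d$, which is exactly what makes the two crude estimates above suffice uniformly in $d$ without any separate finite case analysis.
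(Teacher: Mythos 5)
Your proof is correct, and it diverges from the paper's in its second half. The opening is identical: you compute the same derivative $f'(c)=-\frac{d_r+1}{c^2}e^{1/c}-\bigl(2-\frac{1}{d}\bigr)d_r+2cd_r-(2c+1)d_r e^{-1/c}$ and apply the same elementary bound $e^{-1/c}\geq 1-\frac{1}{c}$, arriving (after the same cancellation of the $\pm 2cd_r$ terms) at the reduced inequality $\frac{d_r+1}{c^2}e^{1/c}+d_r\bigl(\frac{d-1}{d}-\frac{1}{c}\bigr)\geq 0$. From there the paper linearizes the remaining exponential as well, via $e^{1/c}\geq 1+\frac{1}{c}$, and then splits on the degree: for $d_r=2$ it checks the sign of the cubic $-c^3+2c^2-3c-3$, and for $d_r\geq 3$ it invokes the range constraint $c>\frac{1}{W(1.27)}$ inherited from Case 3 of Proposition~\ref{prop:unweightedNash} to conclude $1-d_r+\frac{d_r}{c}<d_r\bigl(\frac{1}{3}-1+0.66\bigr)<0$, a numerically tight step that genuinely needs the Lambert-W bound on $c$. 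You instead keep $e^{1/c}$ intact, clear the $c^{-2}$ by multiplying through, and split on $c\gtrless\frac{d}{d-1}$; your downward-parabola observation that $d_r\bigl(c-c^2\frac{d-1}{d}\bigr)$ is maximized at $c=1$ with value $\frac{d_r}{d}\leq 1$, against $(d_r+1)e^{1/c}\geq 2e^{1/2}>1$ on $c<\frac{d}{d-1}\leq 2$, closes everything uniformly. This buys you three things the paper's version lacks: validity on all of $c\geq 1$ rather than only $c>\frac{1}{W(1.27)}\approx 1.52$, no numerical Lambert-W estimates, and uniform coverage of all $1\leq d_r\leq d$ — in particular $d_r=1$, which the paper's two cases ($d_r=2$ and $d_r\geq 3$) do not explicitly treat even though Proposition~\ref{prop:unweightedNash} permits it. What the paper's route buys in exchange is slightly shorter arithmetic once the constraint on $c$ is assumed from context. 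A cosmetic remark: since $e^{-1/c}>1-\frac{1}{c}$ is strict for $c>0$, your argument in fact yields strict decrease, consistent with the paper's strict inequalities.
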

\begin{proof}
Note that the term is differentiable and
\begin{align*}
D &:= \frac{\partial}{\partial c} \left((d_r + 1)e^\frac{1}{c} - \biggl(2-\frac{1}{d}\biggr)c d_r+c^2d_r (1-e^{-\frac{1}{c}})\right)\\
&=-\frac{d_r+1}{c^2}e^\frac{1}{c} - \biggl(2-\frac{1}{d}\biggr)d_r + 2cd_r + (-2cd_r-d_r)e^{-\frac{1}{c}}\\
&< -\frac{d_r+1}{c^2}(1+\frac{1}{c}) - 2d_r + 1 + 2cd_r - (2cd_r+d_r)\biggl(1-\frac{1}{c}\biggr)\\
&< -\frac{d_r+1}{c^2} -\frac{d_r+1}{c^3} + 1 - d_r + \frac{d_r}{c}\;.
\end{align*}
For $d_r=2$ we get
\begin{align*}
D &= -\frac{3}{c^2} -\frac{3}{c^3} - 1 + \frac{2}{c} = \frac{-c^3+2c^2-3c-3}{c^3}\;.
\end{align*}
where the numerator can easily be checked to be negative for all $c>0$. Thus, we can assume $d_r \geq 3$. In this case we calculate
\begin{align*}
D < 1 - d_r + \frac{d_r}{c} < d_r(\frac{1}{d_r}-1+W(1.27)) < d_r(\frac{1}{3}-1+0.66) <0\;,
\end{align*}
which completes the proof.
\end{proof}

\begin{lemma}
\label{lem:technicalDanielCase3}
Let $\Xi_d$ be the solution to the equation
\[d\left(2x e^{\frac{1}{x}} + x^2 - e^{\frac{2}{x}} - x^2 e^\frac{1}{x}\right) = e^\frac{2}{x}\;.\]
Then, $\Xi_d$ is decreasing in $d$.
\end{lemma}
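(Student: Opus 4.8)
The plan is to reduce the implicit definition of $\Xi_d$ to a monotone equation of the form $\phi(\Xi_d) = \frac{d}{d+1}$ and then to show that $\phi$ is strictly decreasing, so that the strictly increasing right-hand side forces $\Xi_d$ to decrease.

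First I would reformulate the defining equation. Dividing $d(2xe^{1/x} + x^2 - e^{2/x} - x^2 e^{1/x}) = e^{2/x}$ by $e^{1/x} \neq 0$ and collecting the terms containing $d$, one obtains, by the same manipulation as in the proof of Lemma \ref{lem:unweighted-interval}, the equivalent equation
\[
\frac{d}{d+1}\,g(x) = \frac{1}{x}e^{1/x}, \qquad g(x) = xe^{-1/x} - x + 2,
\]
valid for $x > 0$. Writing $\phi(x) = \frac{e^{1/x}/x}{g(x)}$, the equation becomes $\phi(\Xi_d) = \frac{d}{d+1}$. Since $g(x) \in [1,2]$ on $\mathbb{R}_{>0}$ (shown in Lemma \ref{lem:unweighted-interval}), in particular $g > 0$, so $\phi$ is well defined and positive.

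The heart of the argument is to prove that $\phi$ is strictly decreasing on all of $\mathbb{R}_{>0}$, which I would do via the logarithmic derivative. Since $\ln\phi(x) = \frac1x - \ln x - \ln g(x)$, we have $(\ln\phi)'(x) = -\frac{1}{x^2} - \frac1x - \frac{g'(x)}{g(x)}$, and because $g > 0$ the sign of $(\ln\phi)'$ is governed by $g'(x) + g(x)\bigl(\frac1x + \frac1{x^2}\bigr)$, which we want to be positive. Substituting $g'(x) = e^{-1/x}(1+\frac1x) - 1$ and simplifying, this quantity equals $2e^{-1/x}(1+\frac1x) - 2 + \frac1x + \frac{2}{x^2}$. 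Setting $t = 1/x$, it then suffices to show $\psi(t) := 2e^{-t}(1+t) + 2t^2 + t - 2 > 0$ for all $t > 0$. Here I would check $\psi(0) = 0$ and $\psi'(t) = -2te^{-t} + 4t + 1 > 2t + 1 > 0$ for $t > 0$ (using $e^{-t} < 1$), which gives $\psi(t) > 0$ and hence $(\ln\phi)'(x) < 0$ throughout.

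Finally, since $\phi$ is continuous and strictly decreasing, it is injective with a strictly decreasing inverse, and as $\frac{d}{d+1}$ is strictly increasing in $d$, the identity $\Xi_d = \phi^{-1}\bigl(\frac{d}{d+1}\bigr)$ shows that $\Xi_d$ is strictly decreasing in $d$. The main obstacle is the monotonicity of $\phi$; pleasantly, the reduction to $\psi$ produces an inequality that holds on the entire positive axis, so no case distinction and no a priori bound on $\Xi_d$ (such as those from Lemma \ref{lem:unweighted-interval}) is actually needed.
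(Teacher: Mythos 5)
Your proof is correct, and it takes a genuinely different route from the paper's. The paper divides the defining equation by $e^{2/x}$, rewrites it as $d\left(2xe^{-1/x} + x^2e^{-2/x} - 1 - x^2e^{-1/x}\right) = 1$, and argues that the bracketed expression is increasing in $x$ --- but it verifies the positivity of the derivative's numerator, $e^{1/x}(x+2-2x^2)+2x(x+1)$, only on the window $\frac{3}{2} \leq x \leq 2$, which it must first justify by invoking the localization of $\Xi_d$ from Lemma~\ref{lem:unweighted-interval}; the argument is thus two-staged and leans on that a priori interval. You instead normalize to $\phi(\Xi_d) = \frac{d}{d+1}$ with $\phi(x) = \frac{e^{1/x}}{x\,g(x)}$ and prove \emph{global} strict monotonicity of $\phi$ on $\mathbb{R}_{>0}$ via the logarithmic derivative. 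I checked the algebra: your reformulation $\frac{d}{d+1}g(x) = \frac{1}{x}e^{1/x}$ is the correct analogue of the one in Lemma~\ref{lem:unweighted-interval} (the $-\frac{1}{d+1}$ shift there stems from the coefficient $2-\frac{1}{d}$, which is absent in the present equation), the quantity $g'(x) + g(x)\left(\frac{1}{x}+\frac{1}{x^2}\right)$ does simplify to $\psi(1/x)$ with $\psi(t) = 2e^{-t}(1+t)+2t^2+t-2$, and your verification $\psi(0)=0$, $\psi'(t) = -2te^{-t}+4t+1 > 2t+1 > 0$ is sound, as is the sign bookkeeping (which only needs $g>0$; your citation of Lemma~\ref{lem:unweighted-interval} for $g \in [1,2]$ is legitimate since that bound holds for all $x \geq 0$, and positivity also follows directly from $1-e^{-1/x} \leq \frac{1}{x}$, so your closing remark that no a priori bound on $\Xi_d$ is needed is accurate). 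What your route buys: no interval restriction and no case analysis, and as a by-product the strict monotonicity of $\phi$ --- which decreases bijectively from $\infty$ to $0$ on $\mathbb{R}_{>0}$ --- yields existence and uniqueness of $\Xi_d$ for every $d>0$ for free, something the paper establishes separately (for a closely related equation) in Lemma~\ref{lem:XidUnique}. The paper's version differentiates the $d$-isolated form more directly, but at the price of the delicate, localized numerator estimate.
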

\begin{proof}
The equation in the statement of the lemma is equivalent to
\[d\left(2x e^{-\frac{1}{x}} + x^2e^{-\frac{2}{x}} - 1 - x^2 e^{-\frac{1}{x}}\right) = 1.\]
The solution $x=\Xi_d$ to this equation is decreasing in $d$ if the expression in the bracket is increasing in $x$. We consider the derivative
\begin{align*}
&\frac{\partial}{\partial x} 2x e^{-\frac{1}{x}} + x^2e^{-\frac{2}{x}} - 1 - x^2 e^{-\frac{1}{x}}\\
&= 2e^{-\frac{1}{x}} + \frac{2}{x}e^{-\frac{1}{x}} + 2xe^{-\frac{2}{x}} + 2e^{-\frac{2}{x}} - 2xe^{-\frac{1}{x}} - e^{-\frac{1}{x}}\\
&= \frac{e^\frac{1}{x}(x+2-2x^2) + 2x(x+1)}{e^\frac{2}{x}x}.
\end{align*}
Note that Lemma~\ref{lem:unweighted-interval} guarantees that $\Xi_d \geq \frac{1}{W(1.27)} \approx 1.522$ for all $d$. The denominator is positive for $x\geq 1$, so the derivative is positive if the numerator is positive. We use that $\frac{3}{2} \leq x\leq \frac{1}{W\left(\frac{1.2d-1}{d+1}\right)} \leq \frac{1}{W\left(\frac{1.4}{3}\right)} \leq 2$ for all $d\geq 2$ because the Lambert $W$-function is monotonically increasing for positive arguments. We have
\begin{align*}
e^\frac{1}{x}(x+2-2x^2) + 2x(x+1) &\geq e^\frac{2}{3}(2x+2-2x^2)+2x(x+1)\\
&\geq 2(2x+2-2x^2)+2x(x+1)\\
&= -2x^2+6x+2 \geq 0,
\end{align*}
for all $\frac{3}{2} \leq x\leq 2$.
\end{proof}

\begin{lemma}
\label{lem:dXidIncreasing}
Let $\Xi_d$ be the (unique) solution to the equation
\[d\left(2x e^{\frac{1}{x}} + x^2 - e^{\frac{2}{x}} - x^2 e^\frac{1}{x}\right) = e^\frac{2}{x}.\]
Then, $\Xi_d d$ is monotonically increasing in $d$.
\end{lemma}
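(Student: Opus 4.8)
The plan is to replace the implicit definition of $\Xi_d$ by an explicit monotone reparametrization. Dividing the defining equation by $e^{2/x}$ (legitimate since $x>0$) turns it into $d\,h(x)=1$, where $h(x)=2xe^{-1/x}+x^2e^{-2/x}-1-x^2e^{-1/x}$ and $x=\Xi_d$; this is exactly the reformulation already used in Lemma~\ref{lem:technicalDanielCase3}. Consequently $\Xi_d\,d = x/h(x)=:G(x)$ evaluated at $x=\Xi_d$. Since Lemma~\ref{lem:technicalDanielCase3} tells us that $x=\Xi_d$ is decreasing in $d$, the claim that $\Xi_d\,d$ is increasing in $d$ is equivalent to showing that $G$ is \emph{decreasing} in $x$ on the range traversed by $\Xi_d$.

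Next I would differentiate. Along the curve $h(x)=1/d>0$, so $G'(x)=(h(x)-x h'(x))/h(x)^2$ has the sign of $h(x)-x h'(x)$, and it suffices to prove $x h'(x)>h(x)$. Using the derivative $h'(x)=\bigl(e^{1/x}(x+2-2x^2)+2x(x+1)\bigr)/(e^{2/x}x)$ computed in Lemma~\ref{lem:technicalDanielCase3}, I multiply the target inequality by $e^{2/x}>0$ and simplify; the exponential and polynomial contributions collapse to
\[
e^{2/x}\,\bigl(x h'(x)-h(x)\bigr) = e^{2/x} - (x^2+x-2)\,e^{1/x} + (x^2+2x)\;.
\]

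The key observation is to read the right-hand side as a quadratic $Q(u)=u^2-(x^2+x-2)u+(x^2+2x)$ in the variable $u=e^{1/x}$. Its discriminant factors neatly, namely $(x^2+x-2)^2-4(x^2+2x)=(x+2)(x^3-7x+2)$, and the point is that this discriminant is negative on the relevant range: the proofs of Lemmas~\ref{lem:technicalDanielCase3} and \ref{lem:unweighted-interval} confine $\Xi_d$ to $[\tfrac32,2]$, an interval on which $x^3-7x+2<0$ (one checks the two endpoints together with the interior minimum at $x=\sqrt{7/3}$). With a positive leading coefficient and a negative discriminant, $Q$ is strictly positive for every real $u$, in particular at $u=e^{1/x}$, so $x h'(x)-h(x)=e^{-2/x}Q(e^{1/x})>0$. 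Hence $G$ is strictly decreasing in $x$, and composing with the decreasing map $d\mapsto\Xi_d$ shows that $\Xi_d\,d=G(\Xi_d)$ is strictly increasing in $d$.

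I expect the main obstacle to be the algebraic simplification that turns $x h'(x)-h(x)$ into the clean quadratic, together with certifying the sign of the discriminant uniformly over the whole parameter range rather than at a single point. The monotonicity and the bounds $\tfrac32\le\Xi_d\le 2$ needed for the latter are precisely what Lemmas~\ref{lem:technicalDanielCase3} and \ref{lem:unweighted-interval} provide, so the argument reduces entirely to the factorization step above.
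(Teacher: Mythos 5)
Your proof is correct, and its decisive step is genuinely different from the paper's. Both arguments share the same skeleton: divide the defining equation by $e^{2/x}$ to get $d\,h(x)=1$, write $\Xi_d\,d=G(\Xi_d)$ with $G(x)=x/h(x)$, invoke Lemma~\ref{lem:technicalDanielCase3} for the monotonicity of $d\mapsto\Xi_d$, and reduce to showing $G$ is decreasing. Where the paper rewrites $G(x)=e^{1/x}\big/\bigl(2+xe^{-1/x}-\tfrac{1}{x}e^{1/x}-x\bigr)$ and argues numerator-decreasing/denominator-increasing, anchoring positivity of the denominator at its zero $\Xi_\infty$ and using only the universal estimates $e^{1/x}>1+\tfrac1x$ and $e^{-1/x}>\tfrac{x-1}{x}$ (valid on all of $(\Xi_\infty,\infty)$, so no numeric localization of $\Xi_d$ is ever needed), you differentiate $G$ directly and certify $xh'(x)>h(x)$ via the identity
\begin{align*}
e^{2/x}\bigl(xh'(x)-h(x)\bigr) \;=\; e^{2/x}-(x^2+x-2)\,e^{1/x}+(x^2+2x) \;=\; Q\bigl(e^{1/x}\bigr)\,,
\end{align*}
a quadratic in $u=e^{1/x}$ whose discriminant factors as $(x+2)(x^3-7x+2)$; I have verified both the identity and the factorization. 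Your version buys a self-contained, purely polynomial certificate with no further transcendental estimates, which is elegant; what it costs is a dependence on a priori range bounds for $\Xi_d$, and this is the one fragile point. The negative-discriminant shortcut works only while $x^3-7x+2<0$, i.e.\ for $x$ below roughly $2.49$, and the window $\Xi_d\in[\tfrac32,2]$ you import is exactly the numerically shaky claim in the proof of Lemma~\ref{lem:technicalDanielCase3}: in fact $1/W(1.4/3)\approx 3$, not $\le 2$, and $\Xi_2$ itself is slightly above $2$ (about $2.03$). Your argument survives because $2.03<2.49$, but if one only trusts the paper's stated upper bound $\Xi_d\le 1/W\bigl(\tfrac{1.2d-1}{d+1}\bigr)$, which at $d=2$ is about $2.98$, the discriminant may turn positive; there one must instead check directly that $e^{1/x}$ lies below the smaller root of $Q$ (it does, with room to spare, since e.g.\ at the discriminant's zero the double root is $(x^2+x-2)/2\approx 3.3$ while $e^{1/x}<1.5$). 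Adding that one-line check, or restating your range as $x<2.49$ with a correct justification, makes the argument airtight; as it stands it is a valid and more algebraic alternative to the paper's range-free monotone-decomposition proof.
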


\begin{proof}
Let us define $\Xi_{\infty} \coloneqq \lim_{d \rightarrow \infty}{\Xi_d}$. Note that the limit exists due to Lemma \ref{lem:unweighted-interval} and Lemma \ref{lem:technicalDanielCase3}. Let us define $f: (\Xi_\infty, \infty) \rightarrow \mathbb{R_+}$ with
\[x \mapsto \frac{e^\frac{2}{x}}{2xe^{\frac{1}{x}} + x^2 - e^{\frac{2}{x}} - x^2e^\frac{1}{x}},\]
and $g: (\Xi_\infty, \infty) \rightarrow \mathbb{R_+}$, $x \mapsto xf(x)$.
Note that $f(\Xi_d) = d$, i.e.\
\[ \Xi_d d = \Xi_d f(\Xi_d) = g(\Xi_d). \]
Observe that, according to Lemma \ref{lem:technicalDanielCase3}, the solution $\Xi_d$ is monotonically decreasing in $d$. In the remainder of this proof, we show that $g(x)$ is monotonically decreasing in $x$ and therefore $g(\Xi_d)$ is increasing in $d$, which completes the proof.
First, reformulate the term
\[g(x) = \frac{x e^{\frac{2}{x}}}{2xe^{\frac{1}{x}} + x^2 - e^{\frac{2}{x}}- x^2 e^{\frac{1}{x}}} = \frac{e^{\frac{1}{x}}}{2 + xe^{-\frac{1}{x}} - \frac{1}{x}e^{\frac{1}{x}}- x}.\]
Note that $e^\frac{1}{x}$ is decreasing and positive. We will prove the lemma by showing that the denominator is monotonically increasing in $x$ and positive. We observe
\begin{align*}
&2 + \Xi_\infty e^{-\frac{1}{\Xi_\infty}} - \frac{1}{\Xi_\infty}e^{\frac{1}{\Xi_\infty}}- \Xi_\infty\\
&\quad= \left(\frac{1}{\Xi_\infty} e^{-\frac{1}{\Xi_\infty}}\right)\left(2\Xi_\infty e^\frac{1}{\Xi_\infty} + \left(\Xi_\infty\right)^2 - e^{\frac{2}{\Xi_\infty}}- \left(\Xi_\infty\right)^2 e^{\frac{1}{\Xi_\infty}}\right) = 0,
\end{align*}
so positivity of the denominator follows directly if we show that $2 + xe^{-\frac{1}{x}} - \frac{1}{x}e^{\frac{1}{x}}- x$ is strongly monotonically increasing for all $x \in (\Xi_d, \infty)$. In order to prove this, we consider
\begin{align*}
\frac{\partial}{\partial x}&\left(2 + xe^{-\frac{1}{x}} - \frac{1}{x}e^{\frac{1}{x}}- x\right) = -1 + e^{-\frac{1}{x}} + \frac{1}{x^3} e^{\frac{1}{x}} + \frac{1}{x^2} e^{\frac{1}{x}} + \frac{1}{x} e^{-\frac{1}{x}}\\
& > -1 + e^{-\frac{1}{x}} + \frac{1}{x^2} e^{\frac{1}{x}} + \frac{1}{x} e^{-\frac{1}{x}} = -1 + e^{-\frac{1}{x}}\left(1+\frac{1}{x}\right) + \frac{1}{x^2}e^{\frac{1}{x}}.
\end{align*}
In the following, we use that $e > \left(1+\frac{1}{x}\right)^x$, i.e.\
$e^\frac{1}{x} > 1+\frac{1}{x}$, and $e < \left(1+\frac{1}{x-1}\right)^{x}$, i.e.\ $e^{-\frac{1}{x}} > \left(1+\frac{1}{x-1}\right)^{-1} = \frac{x-1}{x}$ for all $x \in (\Xi_\infty, \infty)$. We get
\begin{align*}
-1 + &e^{-\frac{1}{x}}\left(1+\frac{1}{x}\right) + \frac{1}{x^2}e^{\frac{1}{x}} > -1 + \left(\frac{x-1}{x}\right)\left(1+\frac{1}{x}\right) + \frac{1}{x^2}\left(1+\frac{1}{x}\right)\\
&= -1 + \frac{x-1}{x}\frac{x+1}{x} + \frac{1}{x^2}+\frac{1}{x^3} = -1 + \frac{x^3-x}{x^3} + \frac{x}{x^3}+\frac{1}{x^3} > 0,
\end{align*}
i.e.\ the denominator of the term is strongly monotonically increasing in for all $x \in (\Xi_\infty, \infty)$ which completes the proof.
\end{proof}

\begin{lemma}
\label{lem:zLemma}
The maximum of $\left(zd_r\right)^{d_r}\left((d_r+1)\e^{W(1.27)} - (d_r-1)z\right)$, for $0 \leq z \leq \frac{1}{2W\left(\frac{d_r-1}{2(d_r+1)}\right)}$ is attained at $z = \frac{1}{2W\left(\frac{d_r-1}{2(d_r+1)}\right)}$.
\end{lemma}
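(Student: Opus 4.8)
The plan is to show that the function
\[
h(z) := \left(zd_r\right)^{d_r}\bigl((d_r+1)e^{W(1.27)} - (d_r-1)z\bigr)
\]
is monotonically increasing on the entire interval $[0,z_0]$, where $z_0 := \tfrac{1}{2W\left(\frac{d_r-1}{2(d_r+1)}\right)}$; the assertion that the maximum is attained at the right endpoint $z_0$ is then immediate. Throughout I assume $d_r\geq 2$, which is the only relevant range: for $d_r=1$ the factor $(d_r-1)z$ vanishes and $z_0$ is infinite, so this case does not occur in the application.

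First I would differentiate. Writing $A := (d_r+1)e^{W(1.27)}$ and $B := d_r-1$, so that $h(z)=d_r^{d_r}z^{d_r}(A-Bz)$, the product rule gives
\[
h'(z) = d_r^{d_r}\,z^{d_r-1}\bigl(d_r A - (d_r+1)Bz\bigr).
\]
For $z>0$ the sign of $h'(z)$ coincides with the sign of $d_r A - (d_r+1)Bz$, so $h$ is increasing exactly on $[0,z^\ast]$, where $z^\ast := \tfrac{d_r A}{(d_r+1)B} = \tfrac{d_r\,e^{W(1.27)}}{d_r-1}$. Hence the whole lemma reduces to proving the single inequality $z_0\leq z^\ast$.

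The main obstacle is this last inequality, since it compares the Lambert-W function evaluated at two unrelated arguments. I would rewrite $z_0\leq z^\ast$ as $W\!\left(\tfrac{d_r-1}{2(d_r+1)}\right)\geq \beta$ with $\beta := \tfrac{1}{2z^\ast} = \tfrac{d_r-1}{2 d_r e^{W(1.27)}}$, and then invoke that $W$ is the increasing inverse of $v\mapsto v e^v$ on $\mathbb{R}_{\geq 0}$ to convert this into the elementary inequality $\beta e^{\beta}\leq \tfrac{d_r-1}{2(d_r+1)}$. Clearing denominators (note $\tfrac{2(d_r+1)}{d_r-1}\beta = \tfrac{d_r+1}{d_r e^{W(1.27)}}$) shows this is equivalent to $\phi(d_r)\leq 1$, where
\[
\phi(d_r) := \frac{d_r+1}{d_r\,e^{W(1.27)}}\,\exp\!\Bigl(\tfrac{d_r-1}{2 d_r e^{W(1.27)}}\Bigr).
\]

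Finally I would establish $\phi(d_r)\leq 1$ by a monotonicity argument: taking logarithms and differentiating yields
\[
\frac{d}{dd_r}\ln\phi(d_r) = -\frac{1}{d_r(d_r+1)} + \frac{1}{2 e^{W(1.27)} d_r^2},
\]
which is negative for every $d_r\geq 1$ because $2e^{W(1.27)}\approx 3.86 > 1+\tfrac{1}{d_r}$. Thus $\phi$ is decreasing in $d_r$, so $\phi(d_r)\leq \phi(2) = \tfrac{3}{2e^{W(1.27)}}\exp\!\bigl(\tfrac{1}{4 e^{W(1.27)}}\bigr)\approx 0.88 < 1$, which finishes the proof. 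The only genuinely delicate point is the numerical constant at $d_r=2$; the two Lambert-W evaluations require careful bookkeeping, but after the reduction to $\phi$ the remaining computation is a routine derivative sign check.
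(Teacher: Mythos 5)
Your proposal is correct, and its first half coincides with the paper's proof: the paper likewise differentiates (writing the function as $az^{d_r}-bz^{d_r+1}$ with $a=e^{W(1.27)}(d_r+1)d_r^{d_r}$, $b=(d_r-1)d_r^{d_r}$), obtains the same critical point $z^\ast=\frac{d_r}{d_r-1}e^{W(1.27)}$, and reduces the lemma to the inequality $\frac{1}{2W\left(\frac{d_r-1}{2(d_r+1)}\right)}\leq \frac{d_r}{d_r-1}e^{W(1.27)}$. Where you genuinely diverge is in verifying this inequality. The paper does it by casework: it checks $d_r\in\{2,3,4,5\}$ directly, and for $d_r\geq 6$ uses the crude chain $\frac{d_r}{d_r-1}e^{W(1.27)}\geq e^{W(1.27)}>1.93>1.84>\frac{1}{2W(5/14)}\geq \frac{1}{2W\left(\frac{d_r-1}{2(d_r+1)}\right)}$, exploiting monotonicity of $W$. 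You instead use the defining identity of the Lambert-W function to convert the inequality into the elementary exponential inequality $\phi(d_r)=\frac{d_r+1}{d_r e^{W(1.27)}}\exp\bigl(\frac{d_r-1}{2d_r e^{W(1.27)}}\bigr)\leq 1$, prove $\phi$ is decreasing via a sign check on $\frac{d}{dd_r}\ln\phi = -\frac{1}{d_r(d_r+1)}+\frac{1}{2e^{W(1.27)}d_r^2}$ (valid since $2e^{W(1.27)}\approx 3.87>1+\frac{1}{d_r}$), and evaluate only at $d_r=2$, where $\phi(2)\approx 0.88<1$; I verified these computations and they are sound. Your route buys a uniform argument for all $d_r\geq 2$ with a single numerical evaluation instead of five separate checks, and it works for real (not just integer) $d_r$; the paper's route avoids the logarithmic-derivative manipulation at the cost of more bookkeeping. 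Your explicit remark that $d_r=1$ is degenerate (the upper endpoint $\frac{1}{2W(0)}$ is infinite) makes a restriction precise that the paper leaves implicit.
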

\begin{proof}
In order to simplify notation, define $a \coloneqq  e^{W(1.27)}(d_r+1)d_r^{d_r}$, and $b \coloneqq (d_r-1)d_r^{d_r}$. Note that
\begin{align*}
\frac{\partial}{\partial z} &\left(e^{W(1.27)}(d_r+1)d_r^{d_r} z^{d_r} - (d_r-1)d_r^{d_r}z^{d_r+1}\right) = \frac{\partial}{\partial z} \left( a z^{d_r} - b z^{d_r+1}\right)\\
&= z^{d_r-1}\left(d_r a - (d_r+1)bz\right)\;.
\end{align*}
The derivative is positive for all 
\begin{align*}
z<\frac{d_ra}{(d_r+1)b} = \frac{d_r e^{W(1.27)}(d_r+1)d_r^{d_r} }{(d_r+1)(d_r-1)d_r^{d_r}} = \frac{d_r}{d_r-1}e^{W(1.27)}\;.
\end{align*}
It remains to show that $\frac{d_r}{d_r-1}e^{W(1.27)} \geq \frac{1}{2W\left(\frac{d_r-1}{2(d_r+1)}\right)}$.
This can easily be verified for $d_r \in \{2,3,4,5\}$. For $d_r \geq 6$ we have
\begin{align*}
\frac{d_r}{d_r-1}e^{W(1.27)} \geq e^{W(1.27)} > 1.93 > 1.84 > \frac{1}{2 W\left(\frac{5}{14}\right)} \geq \frac{1}{2W\left(\frac{d_r-1}{2(d_r+1)}\right)}\;,
\end{align*}
which finishes the proof.
\end{proof}

\subsection{Appendix: Weighted Resource Allocation Problems}\label{app:weighted}

\begin{lemma}
\label{lemma:optimization-weighted-nash}
Let $\C$ be a set of semi-convex and non-decreasing cost functions and let $\beta = \sup_{c \in \C, x \in \RR_{\geq 0}} \frac{c(x) + c'(x)x}{c(x)}$. Further, let $\lambda > 0$ and $\mu \in [0,\frac{1}{\beta})$ be such that
\begin{align*}
x c(x+y) \leq \lambda xc(x) + \mu yc(y)	\text{ for all $x,y \in \RR_{\geq 0}$ and $c \in \C$\;.}
\end{align*}
Then the best reply algorithm is $\frac{\beta\lambda}{1-\beta\mu}$-competitive.
\end{lemma}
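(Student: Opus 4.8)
The plan is to run the smoothness argument adapted to the one-round best reply walk, letting the parameter $\beta$ absorb the gap between the marginal increase of the total cost and the personal cost of a request. First I would write $C(\vec S)$ as a telescoping sum of the marginal increases incurred as requests arrive. Since in a one-round walk loads never decrease, each resource $r \in S_i$ contributes $(w_r(\vec S_{<i})+w_i)c_r(w_r(\vec S_{<i})+w_i) - w_r(\vec S_{<i})c_r(w_r(\vec S_{<i}))$, and summing these over all $i$ and $r$ recovers $C(\vec S) = \sum_{r} w_r(\vec S)c_r(w_r(\vec S))$. The key first step is to bound each such marginal increase by the personal cost times $\beta$: writing the increase as $\int_{x}^{x+y}(t c_r(t))'\,dt$ with $x=w_r(\vec S_{<i})$, $y=w_i$, using the definition of $\beta$ to get $(t c_r(t))' = c_r(t)+t c_r'(t) \le \beta c_r(t)$, and using that $c_r$ is non-decreasing to replace $c_r(t)$ by $c_r(x+y)$, yields a marginal increase of at most $\beta\, w_i\, c_r(w_r(\vec S_{<i})+w_i)$.

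Next I would invoke the best reply property. Summing the previous bound gives $C(\vec S) \le \beta \sum_{i}\sum_{r \in S_i} w_i c_r(w_r(\vec S_{<i})+w_i)$, and constraint \eqref{eq:private-cost-greedy-constraint} lets me replace the algorithm's allocation $S_i$ by the optimal allocation $S_i^*$ inside the inner sum, so that $C(\vec S)\le\beta \sum_i \sum_{r\in S_i^*} w_i c_r(w_r(\vec S_{<i})+w_i)$. Then, again using monotonicity of $c_r$ together with $w_r(\vec S_{<i})\le w_r(\vec S)$ and the fact that $r\in S_i^*$ forces $w_i \le w_r(\vec S^*)$, I would bound the argument $w_r(\vec S_{<i})+w_i$ by $w_r(\vec S)+w_r(\vec S^*)$. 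Regrouping the double sum by resource and using $\sum_{i : r \in S_i^*} w_i = w_r(\vec S^*)$ gives $C(\vec S) \le \beta \sum_{r} w_r(\vec S^*)\, c_r\big(w_r(\vec S)+w_r(\vec S^*)\big)$.

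Finally I would apply the hypothesis $x c(x+y)\le \lambda x c(x)+\mu y c(y)$ resource by resource with $x=w_r(\vec S^*)$ and $y=w_r(\vec S)$, and sum over $r$ to obtain $\sum_{r} w_r(\vec S^*) c_r(w_r(\vec S)+w_r(\vec S^*)) \le \lambda C(\vec S^*)+\mu C(\vec S)$. Combining with the previous estimate yields $C(\vec S) \le \beta\lambda C(\vec S^*)+\beta\mu C(\vec S)$, and since $\mu < 1/\beta$ ensures $\beta\mu < 1$, rearranging gives $C(\vec S)\le \frac{\beta\lambda}{1-\beta\mu}C(\vec S^*)$, the claimed competitive ratio. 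I expect the main obstacle to be the careful bookkeeping in the second step: correctly justifying the replacement of the intermediate load $w_r(\vec S_{<i})$ and the weight $w_i$ by quantities depending only on the final loads $w_r(\vec S)$ and $w_r(\vec S^*)$, so that the per-resource smoothness inequality can be applied cleanly. The conceptual heart is recognizing that semi-convexity, via $\beta$, is exactly what converts the marginal total-cost increase into a personal cost that the best reply condition controls.
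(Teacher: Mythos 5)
Your proposal is correct and follows essentially the same route as the paper's proof: telescoping the total cost into marginal increases, converting each increase into $\beta$ times the personal cost, invoking the best reply inequality to swap $S_i$ for $S_i^*$, bounding $w_r(\vec S_{<i})+w_i$ by $w_r(\vec S)+w_r(\vec S^*)$, regrouping by resource, and applying the smoothness hypothesis with $x=w_r(\vec S^*)$, $y=w_r(\vec S)$. The only (harmless) variation is that you bound the marginal increase by integrating $(t\,c_r(t))'\leq \beta c_r(t)\leq \beta c_r(x+y)$ pointwise, whereas the paper bounds it by $w_i$ times the derivative of $t\,c_r(t)$ at the right endpoint using semi-convexity; both yield the identical estimate $\beta\, w_i\, c_r(w_r(\vec S_{<i})+w_i)$.
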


\begin{proof}
Let $\vec S_{<i} = (f_1(\requests_{\leq 1}),\dots,f_{i-1}(\requests_{i-1}))$ denote the allocation vector before the $i$-th request is revealed and $c_r(w_r(\vec S_{<i}))$ denote the correspond cost on resource~$r$. Denoting the choice of the algorithm by $S_i = f_i(\requests_{\leq i})$ and an optimal allocation by $\vec S^*$, we obtain the inequality
\begin{align}
\sum_{r \in S_i} c_r(w_r(\vec S_{<i})+w_i) \leq \sum_{r \in S_i^*} c_r(w_r(\vec S_{<i}) + w_i)\;.	
\end{align}

The total cost after the $i$-th request is revealed is
\begin{align*}
C(\vec S_{\leq i}) &= \sum_{r \in R} w_r(\vec S_{\leq i})c_r(w_r(\vec S_{\leq i}))\\
&= \sum_{r \in S_i} (w_r(\vec S_{\leq i-1}) + w_i) c_r(w_r(\vec S_{\leq i-1}) + w_i)\\
&\quad + \sum_{r \notin S_i} w_r(\vec S_{\leq i-1})c_r(w_r(\vec S_{\leq i-1}))\\
&= C(\vec S_{\leq i-1}) + \sum_{r \in S_i} \Bigl((w_r(\vec S_{\leq i-1}) + w_i)c_r(w_r(\vec S_{\leq i-1}) + w_i) \\
&\quad -w_r(\vec S_{\leq i-1})c_r(w_r(\vec S_{\leq i-1}))\Bigr)\\
&\leq C(\vec S_{\leq i-1}) + \sum_{r \in S_i} w_i \Bigl( c_r(w_r(\vec S_{\leq i})) + c_r'(w_r(\vec S_{\leq i}))w_r(\vec S_{\leq i}) \Bigr) \\
&\leq C(\vec S_{\leq i-1}) + \beta \sum_{r \in S_i} w_i\, c_r(w_r(\vec S_{< i})+w_i) \leq C(\vec S_{\leq i-1})\\
&\quad + \beta\sum_{r \in S^*_i} w_i\, c_r(w_r(\vec S_{< i})+w_i)\;.
\end{align*}
Using this inequality $n$ times we obtain
\begin{align*}
C(\vec S) &= \sum_{i \in N} C(\vec S_{\leq i}) - C(\vec S_{\leq i-1}) \leq \sum_{i\in N}  \sum_{r\in S^*_i} \beta\, w_i\,c_r(w_r(\vec S_{<i}) + w_i)\\
&\leq \sum_{i\in N}  \sum_{r\in S^*_i} \beta\, w_i\,c_r(w_r(\vec S) + w_r(\vec S^*)) = \beta \sum_{r \in R} w_r(\vec S^*)c_r(w_r(\vec S) + w_r(\vec S^*))\\
&\leq \beta \sum_{r \in R} \lambda w_r(\vec S^*)c_r(w_r(\vec S^*)) + \mu w_r(\vec S)c_r(w_r(\vec S)) = \beta \lambda C(\vec S^*) + \beta\mu C(\vec S).
\end{align*}
Rearranging terms gives the claimed result.
\end{proof}

We proceed to use Lemma~\ref{lemma:optimization-weighted-nash} in order to give an upper bound on the competitive ratio of the best reply algorithm. We will express the competitive ratio in terms of the unique solution to the equation $(d+1)(x+1)^d = x^{d+1}$ which we will denote by $\Psi_d$.

\begin{proof}[of Theorem~\ref{theorem:weighted-nash-upper}]
By splitting up resources with cost functions $c$ into several resources, we may assume that each resource has a cost function of the form $c_r(x) = a_r x^{d_r}$ with $d_r \in [0,d]$ and $a_r \in \RR_{\geq 0}$. We then obtain
\begin{align*}
\beta = \sup_{c \in \C} \sup_{x \in \RR_{\geq 0}} \frac{c(x) + xc'(x)}{c(x)} = \sup_{a \in \RR_{\geq 0}, d_r \in [0,d]} \sup_{x \in \RR_{\geq 0}} \frac{a x^{d_r} + a {d_r}x^{d_r}}{a x^{d_r}} = d+1\;. 	
\end{align*}

In light of Lemma~\ref{lemma:optimization-weighted-nash}, we can bound the competitive ratio $\rho$ of the best reply algorithm by solving the following minimization problem
\begin{align*}
\rho \leq \inf_{\substack{\lambda \in \RR_{> 0} \\ \mu \in [0,\frac{1}{d+1})}} &\Biggl\{ \frac{(d+1)\lambda}{1-(d+1)\mu} : x(x+y)^{d_r} \leq \lambda x^{{d_r}+1} + \mu y^{{d_r}+1}\\
&\quad \text{ for all $x,y \in \RR_{\geq 0}$, ${d_r} \in [0,d]$}\Biggr\} \;.
\end{align*}
Dividing the inequality by $x^{{d_r}+1}$ and substituting $z = y/x$ gives the equivalent formulation
\begin{align*}
\rho &\leq \inf_{\substack{\lambda \in \RR_{>0}\\ \mu \in [0,\frac{1}{d+1})}} \Biggl\{ \frac{(d+1)\lambda}{1-(d+1)\mu} : \lambda \geq (z+1)^{d_r} - \mu z^{{d_r}+1}\  \forall z \in \RR_{\geq 0},\ {d_r} \in [0,d]\Biggr\}\\
&= \inf_{\mu \in [0,\frac{1}{d+1})} \sup_{\substack{z \in \RR_{\geq 0}\\ {d_r} \in [0,d]}} \Biggl\{ \frac{(z+1)^{d_r} - \mu z^{{d_r}+1}}{\frac{1}{d+1}-\mu}\Biggr\}\;.
\end{align*}
Aland et al.~\cite[Lemma~5.1]{AlandDGMS2011} show that $(z+1)^d - \mu z^{d+1}  \geq (z+1)^{d_r} - \mu z^{{d_r}+1}$ for all ${d_r} \in [0,d]$ for which the latter term is non-negative. They further show \cite[Lemma~5.2]{AlandDGMS2011} that the function $(z+1)^d - \mu z^{d+1}$ has a unique maximum on $\RR_{\geq 0}$. We thus obtain
\begin{align*}
\rho \leq \inf_{\mu \in [0,\frac{1}{d+1})} \max_{z \in \RR_{\geq 0}} \Biggl\{ \frac{(z+1)^d - \mu z^{d+1}}{\frac{1}{d+1}-\mu}\Biggr\}\;.
\end{align*}

Let $\Psi_d$ be the solution to the equation $(d+1)(x+1)^d = x^{d+1}$. We claim that the term above can be upper bounded by $\Psi_d^{d+1}$. To this end, consider the functions $g : \RR_{\geq 0} \times [0,\frac{1}{d+1}) \to \RR$ and $h : \RR_{\geq 0} \to \RR$ defined as
\begin{align*}
g(x,\mu) &= \frac{(x+1)^d - \mu x^{d+1}}{\frac{1}{d+1} - \mu} &
h(x) &= \frac{d(x+1)^{d-1}}{(d+1)x^d}\;.
\end{align*}
We first show that $h(\Psi_d) \in [0,\frac{1}{d+1})$. To see this, note that
\begin{align*}
h(\Psi_d) &= \frac{d(\Psi_d+1)^{d-1}}{(d+1)\Psi_d^d} = \frac{\frac{d}{d+1}(\Psi_d+1)^{d}}{\Psi_d^d(\Psi_d+1)} = \frac{\frac{d}{(d+1)^2}\Psi_d^{d+1}}{\Psi_d^d(\Psi_d+1)} = \frac{\frac{d}{(d+1)^2}\Psi_d}{\Psi_d+1}\\
 &= \frac{1}{d+1} \cdot \frac{d}{d+1} \cdot \frac{\Psi_d}{\Psi_d+1} \in \biggl(0,\frac{1}{d+1}\biggr)\;.
\end{align*}
Second, we claim that if there is a pair $(\hat{x},\hat{\mu}) \in \RR_{\geq 0} \times (0,\frac{1}{d+1})$ with $\hat{\mu}= h(\hat{x})$, then, $g(\hat{x},\hat{\mu}) = \max_{x \in \RR_{\geq 0}} g(x, \hat{\mu})$. To see this claim, note that by construction of $h$, $\hat{x}$ satisfies the first order maximality conditions of $g(\cdot, \hat{\mu})$. As shown by Aland et al.~\cite[Lemma 5.2]{AlandDGMS2011}, the function $(x+1)^d - \mu x^{d+1}$ has a unique maximum and is increasing for values smaller than the maximum and decreasing for all values larger than the maximum. This implies the claim.

For $\hat{\mu} = h(\Psi_d)$, we obtain
\begin{align*}
\rho \leq \max_{x \in \RR_{\geq 0}} g(x, \hat{\mu}) = g(\Psi_d, \hat{\mu}) = \frac{(\Psi_d+1)^d - \hat{\mu} \Psi_d^{d+1}}{\frac{1}{d+1} - \hat{\mu}} = \Psi_d^{d+1}\;,
\end{align*}
which completes the proof.
Furthermore, we have for all $\mu \in (0,\frac{1}{d+1})$ that $\rho \geq \Psi_d^{d+1}$.
\end{proof}

\end{document}